\newcommand{\ignore}[1]{}
\definecolor{forestgreen}{rgb}{0.0, 0.27, 0.13}
\newcommand\numberthis{\addtocounter{equation}{1}\tag{\theequation}}
\newtheorem{defn}{Definition}
\newtheorem{thm}{Theorem}
\newtheorem{cor}{Corollary}[thm]
\newtheorem{lem}{Lemma}
\DeclareMathOperator*{\argmax}{\arg\!\max}
\DeclareMathOperator*{\argmin}{\arg\!\min}
\newif\ifshowanswer    %
\newcommand{\isitthree}[1]
{
  \ifnum#1=3
    number #1 is 3
  \else
    number #1 is not 3
  \fi
}
\newcommand{\be}{\begin{equation}}
\newcommand{\ee}{\end{equation}}
\newcommand\R{{\mathbb{R}}}
\renewcommand\P{{\mathds{P}}}
\newcommand\E{{\mathds{E}}}
\renewcommand\b{\boldsymbol}
\newcommand\CD{{\mathcal D}}
\newcommand\N{{\mathbb N}}
\renewcommand{\epsilon}{\varepsilon}
\newcommand{\acost}{\texttt{aCost}}
\newcommand{\ecost}{\texttt{eCost}}
\newcommand{\adepth}{\texttt{aDepth}}
\newcommand{\edepth}{\texttt{eDepth}}
\renewcommand\b[1]{{\boldsymbol{#1}}}
\begin{document}

\title{Adaptive Data Depth via Multi-Armed Bandits}

\author{\name Tavor Z.\ Baharav \email tavorb@stanford.edu \\
       \addr Department of Electrical Engineering\\
       Stanford University\\
       Stanford, CA 94305, USA
       \AND
       \name Tze Leung Lai \email lait@stanford.edu \\
       \addr Department of Statistics\\
       Stanford University\\
       Stanford, CA 94305, USA}

\editor{}

\maketitle

\begin{abstract}%
Data depth, introduced by \citet{tukey1975mathematics}, is an important tool in data science, robust statistics, and computational geometry.
One chief barrier to its broader practical utility is that many common measures of depth are computationally intensive, requiring on the order of $n^d$ operations to exactly compute the depth of a single point within a data set of $n$ points in $d$-dimensional space.
Often however, we are not directly interested in the absolute depths of the points, but rather in their \textit{relative ordering}.
For example, we may want to find the most central point in a data set (a generalized median), or to identify and remove all outliers (points on the fringe of the data set with low depth).
With this observation, we develop a novel and instance-adaptive algorithm for adaptive data depth computation by reducing the problem of exactly computing $n$ depths to an $n$-armed stochastic multi-armed bandit problem which we can efficiently solve.
We focus our exposition on simplicial depth, developed by \citet{liu1990notion}, which has emerged as a promising notion of depth due to its interpretability and asymptotic properties.
We provide general instance-dependent theoretical guarantees for our proposed algorithms, which readily extend to many other common measures of data depth including majority depth, Oja depth, and likelihood depth.
When specialized to the case where the gaps in the data follow a power law distribution with parameter $\alpha<2$, we show that we can reduce the complexity of identifying the deepest point in the data set (the simplicial median) from $O(n^d)$ to $\tilde{O}(n^{d-(d-1)\alpha/2})$, where $\tilde{O}$ suppresses logarithmic factors.
We corroborate our theoretical results with numerical experiments on synthetic data, showing the practical utility of our proposed methods.
\end{abstract}

\begin{keywords}
  multi-armed bandits, data depth, adaptivity, large-scale computation, simplicial depth
\end{keywords}

\section{Introduction and Background}

For a set $\CD$ of $n$ points in $d$-dimensional space, the simplicial depth of a point $x$ is defined as the fraction of closed $d$ simplices with vertices in $\CD$ which contain $x$.
This empirical measure of depth was formulated by \citet{liu1990notion}, and satisfies many desired properties of depth including affine invariance, and in the continuous limit satisfies maximality in the center and monotonicity relative to the deepest point.
This leads to its usage in robust statistics \citep{maronna2019robust}, computational geometry \citep{becker1987dynamic}, and exploratory data science \citep{wellmann2010tests}.
Dating back to its inception, simplicial depth has been utilized as a metric for detecting outliers in a multivariate data cloud \citep{liu1990notion}.
One common object of interest, particularly in data science settings, is the simplicial median \citep{gil1992geometric}.
This is defined as the point within a data set with the largest simplicial depth, which is equivalent to the traditional median for 1-dimensional data.
The simplicial median satisfies several natural desiderata when extended to high-dimensional data as discussed in \citet{liu1990notion}, which has led to its use in multivariate data analysis \citep{rousseeuw2004computation, schervish1987review}.
The simplicial median falls within a data set, unlike a mean, and so has meaning even if the data is sparse in some domain, as is common in single cell RNA-seq data sets \citep{ntranos2016fast}. %
Additionally, the simplicial median has some level of robustness, with a data dependent breakdown value of up to $1/(d+2)$, and hence is more robust to outliers and adversarial perturbations than objects like the mean \citep{rousseeuw2004computation}.

The rest of the paper is organized as follows:
in the remainder of this section %
we discuss related works, providing background on data depth, multi-armed bandits, and the use of adaptivity as a computational tool.
We formalize the problem setting in \Cref{sec:form}, summarizing our results in \Cref{table:mainResults}.
We construct and provide theoretical guarantees for our adaptive algorithm for simplicial median computation in \Cref{sec:proposedAlg}, which we generalize to other adaptivity amenable tasks in Sections \ref{sec:topk} and \ref{sec:meta}, providing our meta algorithm for adaptive depth computation in \Cref{alg:adaDepthMeta}.
Numerical experiments are provided in \Cref{sec:numerics} to corroborate our theoretical results.
In \Cref{sec:disc} we discuss two computational and theoretical extensions for our proposed algorithms.
We conclude in \Cref{sec:conc}.
Proofs are relegated to \Cref{app:additionalProofs}, and additional experimental details are provided in \Cref{app:expDetails}.

\subsection{Related Work on Data Depth}
The mathematical formalization of data depth dates back to Tukey's halfspace median \citep{tukey1975mathematics}.
Due to the lack of efficient algorithms for its computation, other notions have been proposed, including Oja depth \citep{oja1983descriptive}, Peel depth \citep{barnett1976ordering}, and simplicial depth \citep{liu1990notion}, each with their own strengths and weaknesses \citep{gil1992geometric}.
In this work we focus on simplicial depth, which was introduced by \citet{liu1990notion}, receiving further treatment by \citet{liu1999multivariate}.
Additional proposed improvements regarding the finite-sample characterization of simplicial depth were proposed by \citet{burr2006simplicial}, defining a modified simplicial depth as the average of the fraction of open and closed simplices the point is contained within.
Naively computing the simplicial depth of a query point among $n$ points in $d$-dimensional space requires $O(n^{d+1})$ time as it entails enumerating all possible simplices, but considerable effort has been devoted to developing more efficient geometric methods.
State-of-the-art algorithms require $O(n\log n)$ time for $d=2$ \citep{rousseeuw1996algorithm}, $O(n^2)$ time for $d=3$ \citep{cheng2001algorithms},  and $O(n^{d-1})$ for $d\ge3$ \citep{pilz2020crossing}.
A general lower bound of $\Omega(n^{d-1})$ has been conjectured for exactly computing the simplicial depth of a single point, and while this has been shown to be achievable up to logarithmic factors, this too quickly becomes infeasible for even moderate $n$.
One approach to make this task computationally tractable is to relax the problem and instead only require a $(1+\epsilon)$ approximation of a points simplicial depth. In this case, it was recently shown that the simplicial depth of a given query point could be approximated in $\tilde{O}(n^{d/2+1})$ time for constant $\epsilon>0$ \citep{afshani2015approximating}.
Additive approximations can be achieved by utilizing $\epsilon$-nets and $\epsilon$-approximations \citep{bagchi2007deterministic} as discussed by \citet{afshani2015approximating},
however such approaches yield poor performance as $\epsilon\rightarrow 0$.

While simplicial depth refers to the depth of a point $x$ with respect to a distribution $f$, and sample simplicial depth refers to the depth of a point $x$ with respect to a data set $\CD$ (i.e., the empirical distribution of $\CD$), for readability we omit the qualifier ``sample'' when it is clear from context.

We now focus our attention on the computation of the simplicial median, the point in the data set with the deepest simplicial depth (uniqueness issues discussed later).
This was tackled by \citet{gil1992geometric}, where a scheme was proposed for $d=2$ which utilizes an efficient simplicial depth algorithm similar to that of \citet{rousseeuw1996algorithm} to compute the simplicial depth of each point exactly, requiring $O(n^2 \log n)$ time.
In the same work the authors propose an algorithm for $d=3$, later corrected by \citet{cheng2001algorithms}, which computes the simplicial median in $O(n^3)$ time in a similar manner.
Both of these schemes rely on an efficient algorithm for computing the simplicial depth of a single point, and loop over each point to compute its exact depth using this fast algorithm.
Thus, finding the simplicial median degenerates to $n$ independent simplicial depth computations.
Indeed, it has been conjectured that these algorithms are optimal for the computation of the simplicial median if all the simplicial depths are required to be computed exactly \citep{gil1992geometric}.
While in a minimax sense we can provide no improvement over existing methods, our proposed algorithm is able to provide instance dependent complexity guarantees, where for easier problem instances we are able to only coarsely approximate the simplicial depths of many points, yielding dramatically improved sample complexities.
We mention that other works have tackled more general depth approximation problems, where a scheme for determining a point with deepest simplicial depth over all of $\R^d$ up to a multiplicative $(1+\epsilon)$ was proposed by \citet{aronov2008approximating}. 
Their work optimizes over all of $\R^d$ rather than simply within the data set however, leading to a more general problem and a computational complexity of $\Omega(n^{d+1})$ for their algorithm in this setting.

On a more practical note, we see that no algorithms for $d>2$ are implemented and empirically tested in the literature (approximation or otherwise).
These algorithms rely on complex combinatorial subroutines, and many only yield asymptotic error probability guarantees (for a fixed $\epsilon$) with unspecified or unwieldy constants.
This mitigates their utility in real world applications, where the user wishes to specify both an allowable error tolerance $\epsilon$ (potentially 0), as well as a maximum error probability $\delta$.
This motivates our efficient multi-armed bandit-based algorithm, which is able to satisfy these desiderata in a theoretically sound, computationally efficient, and modular algorithm.

\subsection{Related Work on Multi-Armed Bandits} \label{sec:adaTasks}

The $k$-armed bandit problem was introduced by \citet{robbins1952some} for $k=2$ in his seminal paper on sequential design of experiments, in which he considered sequential sampling from two populations with unknown means to maximize the total expected rewards $\E[y_1 + \hdots + y_n]$, where $y_i$ has mean $\mu_1$ (or $\mu_2$) if it is sampled from population 1 (or 2) and $n$ is the total sample size.
Letting $s_n = y_1 + \hdots +y_n$, he applied the law of large numbers to show that $\lim_{n \to \infty} n^{-1} \E [s_n] = \max(\mu_1,\mu_2)$ is attained by the following rule: sample from the population with the larger sample mean except at times belonging to a designated sparse set $T_n$ of times, and sample from the population with the smaller size at the designated times.
$T_n$ is called ``sparse'' if $\#(T_n) \to \infty$ but $\#(T_n)/n \to 0$ as $n\to \infty$, where $\#(\cdot)$ denotes the cardinality of a set.

In 1957, Bellman introduced the dynamic programming approach to Robbins' 2-armed sequential sampling problem, generalizing it to $k$-arms and calling it a ``$k$-armed bandit problem'' \citep{bellman1966dynamic}.
The name derives from an imagined slot machine with $k$ arms (levers) such that when arm $j$ is pulled the player wins a reward generated by an unknown probability distribution $\Pi_j$.
In this setting where we seek to maximize the expected sum of rewards there is a fundamental dilemma between ``exploration'' (to obtain information about $\Pi_1,\hdots, \Pi_k$ by pulling the individual arms) and ``exploitation'' (of the information so that inferior arms are pulled minimally).
Dynamic programming offers a systematic solution of the dilemma in the Bayesian setting but suffers from the curse of dimensionality as $k$ and $n$ increase, as recognized by Bellman.

Extending these regret analyses beyond the standard stochastic multi-armed bandit setting we consider in this work, \citet{lai1995machine} were able to use large deviation bounds to provide a regret lower bound in the nonparametric setting pioneered by \citet{yakowitz1991nonparametric}, also providing a UCB (upper confidence bound) based adaptive allocation rule.
Using a different method, \citet{Auer2002} developed an $\epsilon$-greedy--based procedure similar to that used in reinforcement learning \citep{sutton2018reinforcement}, which they showed was also able to attain the logarithmic regret lower bound.

These ideas have been further developed in the contextual setting, where \citet{kim2021multi} generalize the definition of regret to the nonparametric setting with stochastic covariates $\{\b{x}_t\}_{t=1}^n$ which are independent and identically distributed.
There, the authors construct an arm elimination based $k$-armed adaptive allocation procedure using $\epsilon$-greedy randomization and a Welch-type test statistic, which generalizes parametric likelihood ratio statistics to the nonparametric setting.
\citet{kim2021multi} show that that under weak regularity conditions their procedure attains the asymptotic minimax rate of the risk functions for adaptive allocation rules.
The authors additionally discuss their advances in the context of reinforcement learning for recommender systems and personalization technologies. 
This motivates a practically important extension of the contextual setting  to the scenario where the number of arms scales with the time horizon and so $k= k_n$ (instead of fixing $k$ and letting $n$ tend to infinity), in which case $k_n$ can exceed $n$, which has been definitively settled in the recent work of \citet{lai2021Bandit} who consider a non-denumerable set of arms.
This recent advance utilizes the decoupling inequalities of \citet{victor2000theory} to handle general dependence structure (unknown) among the covariates.
Related applications of this breakthrough have been given by \citet{sklar2021bandit} and \citet{lai2021novel}.
We direct the interested reader to these works, in particular \citet{lai2021Bandit} and \citet{kim2021multi}, for a more thorough discussion of regret minimization techniques, a broader background regarding the multi-armed bandit literature, and a thorough characterization of the nonparametric contextual multi-armed bandit setting.

Since these foundational works, multi-armed bandits have proven to be an extremely broad and useful model for dealing with decision making under uncertainty, having been utilized in the design of adaptive clinical trials \citep{villar2015multi}, online ad recommendation \citep{lu2010contextual}, multi-agent learning \citep{bistritz2021one}, and many more applications \citep{slivkins2019introduction,bubeck2012regret}.
Most relevant to this work on data depth are the pure exploration variants of the multi-armed bandit problem, where the prototypical objective is to find the best arm \citep{jamieson2014best}, which in this case translates to finding the point with the deepest simplicial depth.
Common algorithmic paradigms for this setting involve either index-based policies similar to the original Upper Confidence Bound (UCB) algorithm \citep{lai1985asymptotically}, and elimination-based algorithms where arms are successively pulled and discarded after they are determined to be suboptimal with high probability \citep{audibert2010best}.
The hardness of the best arm identification problem has been precisely characterized through fundamental lower bounds \citep{kaufmann2016complexity}, algorithms that are within $\log \log$ factors of optimal \citep{jamieson2014lil}, and asymptotically optimal algorithms following the Track-and-Stop paradigm \citep{garivier2016optimal}.

There are many pure exploration problems beyond best arm identification where adaptivity is beneficial.
The most natural extension is to the PAC scenario, where the objective is to identify an arm with mean within $\epsilon$ of the best arm \citep{kalyanakrishnan2012pac}.
Another extension is to the top-$k$ setting, where the objective is  to identify the $k$ arms with largest means \citep{simchowitz2017simulator}.
If a specific value $k$ is not desired, one can instead find all arms with means within an additive or multiplicative $\epsilon$ of the best arm \citep{mason2020finding}.
Another approach would be to coarsely rank the arms into batches \citep{KatJaiEtAl,karpov2020batched}.
Alternatively, one may wish to identify all arms with means above a given input threshold, in what is called the thresholding bandit problem \citep{locatelli2016optimal}.
A final problem could be determining one arm with mean above a threshold in the good arm identification problem \citep{katz2020true}.
While many of these objectives could be of interest in the context of data depth, 
in this work we focus on the task of identifying the simplicial median, corresponding to the best-arm identification problem.

\vspace{-.1cm}
\subsection{Adaptivity as a Computational Tool}
\vspace{-.05cm}
Due to the ever increasing size of data sets, multi-armed bandit-based randomized algorithms have been recognized as a useful tool for constructing instance-optimal algorithms for data science primitives.
From its early uses in Monte Carlo Tree Search \citep{kocsis2006bandit} to more recent utilization in hyper-parameter tuning \citep{li2017hyperband}, adaptivity has been used to focus computational resources on relevant portions of computational tasks.
Recently formalized under the general framework of Bandit-Based Monte Carlo Optimization (BMO) \citep{bagaria2021bandit}, this technique has been utilized to solve many problems, including finding the medoid of a data set \citep{bagaria2018medoids,baharav2019ultra,tiwari2020bandit}, $k$-nearest neighbor graph construction \citep{bagaria2021bandit,lejeune2019adaptive,mason2019learning,mason2021nearest}, Monte Carlo permutation-based multiple testing \citep{zhang2019adaptive}, mode estimation \citep{singhal2020query}, and a rank-one estimation problem \citep{kamath2020adaptive}. 
This general framework provides efficient algorithms for solving problems of the form $f(\theta_1,\hdots,\theta_n)$, where each $\theta_i$ is an expensive to compute but easy to estimate quantity, and $f$ is a function that does not depend too strongly on too many of the $\theta_i$.
This algorithm design philosophy is discussed in more detail in the recent work of \citet{bagaria2021bandit}, and analyzed in the case when $f$ is real-valued and smooth by \citet{baharav2022approx}.
The resulting adaptive algorithms provide a natural and efficient way to solve certain structured large scale computational problems, allowing for dramatic improvements in computational complexity. To realize these theoretical gains as wall-clock improvements, the incorporation of round-efficient multi-armed bandit algorithms is critical.
This allows for batched computations, yielding up to 4-5 orders of magnitude speed ups in wall-clock time \citep{baharav2019ultra}.
In this paper we construct efficient algorithms for simplicial median computation and several related data depth tasks by building on this BMO framework, showing how these techniques can be utilized to construct instance-adaptive algorithms for modern statistical problems like data depth estimation.

\vspace{-.1cm}
\section{Problem Formulation} \label{sec:form}
\vspace{-.05cm}
For a data set $\CD =\{x_i\}_{i=1}^n$ of $n$ points in $d$-dimensional space, we defined simplicial depth in words as the fraction of closed $d$ simplices with vertices in $\CD$ which contain the target point.
Denoting the convex hull (simplex) of points $x_1,\hdots,x_{d+1}$ as $S[x_1,\hdots,x_{d+1}]$, we are able to mathematically define the sample simplicial depth below.
\vspace{-.1cm}
\begin{defn} \label{defn:simDepth}
The sample simplicial depth of a point $x$ with respect to a data set $x_1,\hdots,x_n$ is defined as
\vspace{-.1cm}
\begin{equation} \label{eq:simDepth}
    D_n(x) := {n \choose d+1}^{-1} \hspace{-.5cm}\sum_{1\le i_1<\hdots<i_{d+1}\le n} \hspace{-.5cm}\mathds{1}\left\{ x \in S[x_{i_1},\hdots,x_{i_{d+1}}]\right\}.
\end{equation}
\end{defn}
\vspace{-.1cm}
As discussed, one natural object of interest in data science applications is the simplicial median, the point in the data set with the largest simplicial depth.
This is formalized as a point in the set
\begin{equation}\vspace{-.1cm}
    \argmax_{x_i \in \CD} D_n(x_i), \label{eq:objective}
\end{equation}%
\noindent which we assume to be unique for the rest of this work for simplicity of presentation.
Our proposed algorithms can accommodate standard solutions to lack of uniqueness such as defining the simplicial median as the average of all points contained in the argmax of \eqref{eq:objective} as is done by \citet{rousseeuw2004computation}.

For the computation of \eqref{eq:simDepth}, determining whether a point $x$ is contained within a given simplex $S[x_1,\hdots,x_{d+1}]$ can be seen to be equivalent to verifying whether $x$ can be expressed as a linear combination with nonnegative coefficients of $\{x_i\}_{i=1}^{d+1}$.
These coefficients are known as the (normalized) barycentric coordinates of $x$ with respect to $\{x_i\}$.
Computing these barycentric coordinates entails solving a $d$-dimensional linear system, which can be efficiently performed in $O(d^3)$ time for a given simplex.
The difficulty in computing \eqref{eq:simDepth} arises from the $n \choose {d+1}$ simplices that naively need to be enumerated to compute the exact depth of a point.
We provide further discussion on this point and on the efficiency of batch operations for this goal in \Cref{sec:barycentric}.

\subsection{Results and Contributions}
\vspace{-.05cm}
Our objective in this paper is to construct an algorithm that efficiently computes the simplicial median \eqref{eq:objective} with error probability at most $\delta$.
We provide in \Cref{table:mainResults} a summary of the results in this paper, where $\tilde{O}$ hides logarithmic factors in the gaps $\{\Delta_i\}$ and $n$.

\def\arraystretch{1.3}
\begin{table}[h]
\hspace{-4.38pt}
\begin{tabular}{l|l|l|l}
                                & Gaps      & Task    & Complexity     \\ \hline
\Cref{thm:main}                  & arbitrary   & simplicial median & See \Cref{thm:main}      \\ \hline
\Cref{cor:d23}                    & arbitrary   & simplicial median   & $\tilde{O}\left(\sum_i \min(\Delta_i^{-2}, n^{d-1})\right)$    \\ \hline
\Cref{cor:powerLaw}             & random, $\alpha$ power law & simplicial median  & $\tilde{O}\left( n^{d -(d-1)\alpha/2}\right)$   \\ \hline
\Cref{thm:simplicialTopk}                  & arbitrary   & deepest-$k$ & $\tilde{O}\left(\sum_i \min\left(\big(\Delta_i^{(k)}\big)^{-2}, n^{d-1}\right)\right)$\\ \hline
\makecell{\Cref{thm:metaAlg} \\ \Cref{cor:metaAlg}}  & arbitrary   & general &  See \Cref{thm:metaAlg}
\end{tabular}
\vspace{-.2cm}
\caption{Summary of the results provided in this paper for dynamic data depth computation for $d\ge 1$. Complexity column only displays coarsely bounded results for ease of readability, see formal statements for tighter results. All sample complexities scale linearly in $\log(1/\delta)$. }\label{table:mainResults}
\vspace{-.2cm}
\end{table}

Unlike prior works which utilize the BMO technique \citep{bagaria2021bandit}, an interesting feature of our algorithms for adaptive simplicial depth computation is that they are able to incorporate in a modular manner an optimized exact computation algorithm.
Concretely, many of these previous tasks have $\theta_i$ (the difficult to compute but easy to approximate quantities) that are expressible as sums, where Monte Carlo samples are generated by sampling a random term in this sum and exact computation is performed by iterating over the entire sum.
In this setting Monte Carlo samples are similarly generated by noting that
\begin{equation}
    \E \left[  \mathds{1}\left\{ x \in S[x_{i_1},\hdots,x_{i_{d+1}}]\right\} \right] = D_n(x),
\end{equation}
for $\{x_{i_j}\}_{j=1}^{d+1}$ drawn uniformly at random from subsets of $\CD$ of size $d+1$.
In this setting however, exact computation can be performed much more efficiently than simply enumerating all ${n \choose d+1} = \theta(n^{d+1})$ simplices (terms in the sum) to exactly compute the simplicial depth of a point; we know from the works of \citet{rousseeuw1996algorithm} and \citet{pilz2020crossing} that exact computation algorithms exist that require only $\tilde{O}(n^{d-1})$ time.
This leads to a reduced threshold for exact computation, the number of Monte Carlo samples beyond which it is determined to be more efficient to exactly compute the depth of a given point than to obtain further Monte Carlo samples to approximate it to greater accuracy. 
This extends and highlights the modularity of the BMO framework \citep{bagaria2021bandit}; better approximation methods and exact computation schemes can be seamlessly integrated into a meta algorithm (\Cref{alg:adaDepthMeta}), yielding improved overall performance.

\section{Proposed Algorithm} \label{sec:proposedAlg}
In this section we formulate and discuss \Cref{alg:adaDepth}, which efficiently finds the simplicial median of an input data set $\CD$ with failure probability at most $\delta$.
This algorithm utilizes the multi-armed bandit algorithm \texttt{Multi-Round $\varepsilon-$Arm} (Algorithm 3 of \citet{hillel2013distributed}).
We proceed in rounds, where in round $r$ the goal is to eliminate all $\epsilon_r = 2^{-r}$ suboptimal points.
This is achieved by constructing $t_r$ total random simplices in round $r$ such that we are able to estimate each remaining point's simplicial depth to within an additive $\epsilon_r/2$ with high probability.
A point's depth is iteratively approximated to greater and greater accuracy until it is either eliminated, or the cost of approximating its depth to accuracy $\epsilon_r/2$ is greater than the cost of exactly computing its simplicial depth, \ecost, as in Line \ref{line:exactCompAlg}.
Note that this cost \ecost \ should be compared in round $r$ with the cost of performing $t_r$ barycentric coordinate computations.
This allows for potentially dramatic computational gains on an instance by instance basis, as the simplicial depth of each point is only approximated to the necessary accuracy for the task at hand (determining if it is the deepest point or not).
\begin{algorithm}[t]
  \caption{\texttt{Adaptive Simplicial Median} \label{alg:adaDepth}}
\begin{algorithmic}[1] 
    \State \textbf{Input: } data set $\CD = x_1,\hdots,x_n\in\R^d$, error probability $\delta$, exact computation method \edepth\ with runtime $\ecost$
    \State \textbf{Initialize: } $S_1 \gets [n]$, $r=0$
    \Repeat
    \State $r\gets r+1$
    \State Set $\epsilon_r = 2^{-r},t_r=\lceil2\epsilon_r^{-2}\log(4nr^2/\delta) \rceil$
    \If{$t_r > \ecost$} \Comment{Use specialized exact method} \label{line:exactCompAlg}
    \State $\hat{\mu}_i^r \gets \edepth(x_i,\CD)$ for $i \in S_r$
    \State \Return $\argmax_{i\in S_{r}} \hat{\mu}_i^r$
    \EndIf
    \State Select $X_1,\hdots,X_{t_r - t_{r-1}}$ independently and u.a.r. as subsets of size $d+1$ of $\CD$
    \State For $i\in S_r$, $j\in[t_{r-1},t_r]$, compute $Y_{i,j}=\mathds{1}\{x_i\in \Delta(X_{j})\}$
    \State For $i \in S_r$ construct $\hat{\mu}_i^r$ from all previous $Y_{i,j}$, let $\hat{\mu}_{*}^r = \max_{i\in S_r} \hat{\mu}_i^r$
    \State Set $S_{r+1}\gets S_r \setminus \{i \in S_r : \hat{\mu}_i^r < \hat{\mu}_{*}^r - \epsilon_r\} $
  \Until $|S_{r+1}| =1$
  \State \Return Point in $S_{r+1}$
\end{algorithmic}
\end{algorithm}

\subsection{Theoretical Guarantees}
We are able to provide sample complexity bounds regarding the performance of Algorithm \ref{alg:adaDepth} by showing that every suboptimal point will be eliminated after an appropriate number of rounds (after its depth has been approximated to the necessary accuracy).
For notational simplicity we assume that the points are sorted in order of simplicial depth and so $x_1$ is the simplicial median, but that the algorithm does not know this.
We define $\mu_i=D_n(x_i)$ and $\Delta_i = \mu_1-\mu_i$, with the understanding that if for some $i$ we have that $\Delta_i=0$, then $\Delta_i^{-2} = +\infty$.
With this definition in hand, we are able to state the following Theorem regarding the number of barycentric coordinate computations made by \Cref{alg:adaDepth} (computations of whether a point is inside a $d$-dimensional simplex).

\begin{thm}[Main Result] \label{thm:main}
Algorithm \ref{alg:adaDepth} succeeds with probability at least $1-\delta$ in returning the simplicial median of an input data set $\CD$, requiring computation at most
\begin{equation*}
n + \sum_{i=1}^n \min\left(\frac{32 \log \left(\frac{16n}{\delta}\log_2^2 \left(\frac{2}{\Delta_i}\right)\right)}{\Delta_i^2}, 2\times \ecost\right)
\end{equation*}
where \ecost\ is the cost of the input exact computation method.
On this $\delta$ probability error event, the algorithm will still terminate within $2 n\times \ecost$ time.
\end{thm}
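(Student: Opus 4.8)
The plan is to condition on a single ``clean'' event on which every empirical depth is accurate, prove both correctness and the complexity bound there, and then separately bound the cost on the complementary event so that the $2n\times\ecost$ termination guarantee holds unconditionally.

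\textbf{Step 1 (clean event and correctness).} First I would note that $\hat\mu_i^r$ is an empirical mean of $t_r$ i.i.d.\ Bernoulli$(\mu_i)$ indicators, so Hoeffding gives $\P(|\hat\mu_i^r-\mu_i|>\epsilon_r/2)\le 2\exp(-t_r\epsilon_r^2/2)$. The schedule $t_r=\lceil 2\epsilon_r^{-2}\log(4nr^2/\delta)\rceil$ makes this at most $\delta/(2nr^2)$, so a union bound over the $\le n$ active arms and all rounds $r\ge1$ yields $\P(\CE^c)\le \tfrac{\delta}{2}\sum_{r\ge1}r^{-2}=\tfrac{\pi^2}{12}\delta<\delta$ for the clean event $\CE=\{|\hat\mu_i^r-\mu_i|\le\epsilon_r/2\ \text{for all }i,r\}$; the $r^2$ inside the logarithm is precisely what makes this infinite-round union bound converge. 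On $\CE$, every round satisfies $\hat\mu_*^r\le\mu_1+\epsilon_r/2$ and $\hat\mu_1^r\ge\mu_1-\epsilon_r/2$, hence $\hat\mu_1^r\ge\hat\mu_*^r-\epsilon_r$, so the median $x_1$ is never eliminated. If the loop exits with $|S_{r+1}|=1$ the survivor must then be $x_1$; if instead the exact branch (Line~\ref{line:exactCompAlg}) fires, $\edepth$ returns the true $\argmax$, again $x_1$. Either way the output is correct on $\CE$.

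\textbf{Step 2 (complexity on $\CE$).} On $\CE$, a suboptimal arm $i$ obeys $\hat\mu_*^r-\hat\mu_i^r\ge\Delta_i-\epsilon_r$, so it is eliminated once $\epsilon_r<\Delta_i/2$. Letting $R_i$ be the first such round gives $\Delta_i/4\le\epsilon_{R_i}<\Delta_i/2$ and $R_i\le\log_2(2/\Delta_i)+1$, so the barycentric computations spent on arm $i$ through elimination satisfy
\[
t_{R_i}\ \le\ \frac{32}{\Delta_i^2}\log\!\Big(\tfrac{16n}{\delta}\log_2^2\big(\tfrac{2}{\Delta_i}\big)\Big)+1 ,
\]
using $\epsilon_{R_i}^{-2}\le 16/\Delta_i^2$ and $R_i^2\le 4\log_2^2(2/\Delta_i)$. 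To fold in the fallback, let $r^\star$ be the first round with $t_{r^\star}>\ecost$ (so $t_{r^\star-1}\le\ecost$). Any arm reaching Line~\ref{line:exactCompAlg} has accumulated at most $t_{r^\star-1}\le\ecost$ indicators and then pays one exact computation of cost $\ecost$, totalling at most $2\ecost$, whereas an arm eliminated earlier costs $t_{R_i}$; thus each arm contributes at most $\min(t_{R_i},2\ecost)$. Summing and using $\min(a+1,b)\le\min(a,b)+1$ to extract the $n$ ceilings yields the stated bound. (The slack in the $2\ecost$ factor is exactly what absorbs the accumulated-sample term charged on top of the single exact computation.)

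\textbf{Step 3 (unconditional termination).} Since the step~2 bound holds only on $\CE$, I would separately observe that $t_r$ grows geometrically to infinity, so a round $r^\star$ with $t_{r^\star}>\ecost$ always exists, at which the exact branch halts the algorithm (if it has not already stopped with $|S_{r+1}|=1$ at smaller cost). Up to that point each of the $\le n$ arms has received at most $t_{r^\star-1}\le\ecost$ samples, and the exact branch performs at most $n$ computations of cost $\ecost$, for a total of at most $2n\,\ecost$ regardless of the realized outcomes. The Hoeffding/union-bound and the elimination-round arithmetic are routine; the main obstacle is the bookkeeping in Step~2 — charging each arm the \emph{minimum} of its adaptive sampling cost and the exact fallback, accounting for the samples ``wasted'' when the exact branch fires, and checking that the geometric schedule together with the threshold $\ecost$ reproduces the $\min(\,\cdot\,,2\ecost)$ form with the additive $n$ coming from the ceilings.
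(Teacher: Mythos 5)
Your proposal is correct and follows essentially the same route as the paper's proof: the same Hoeffding-plus-union-bound clean event (the paper's Lemma~\ref{lem:conf}), the same elimination-round arithmetic giving the $32\Delta_i^{-2}\log\left(\frac{16n}{\delta}\log_2^2\left(\frac{2}{\Delta_i}\right)\right)+1$ per-arm bound, and the same $\min(\cdot,2\times\ecost)$ accounting for the exact-computation fallback and the unconditional $2n\times\ecost$ termination guarantee. The only cosmetic difference is that the paper handles the never-eliminated median by setting $\Delta_1:=\Delta_2$, while you absorb it into the $2\times\ecost$ cap; both suffice for the stated bound.
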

We analyze the requisite sample complexity of our adaptive algorithm following standard multi-armed bandit techniques.
We begin by showing that each depth estimator $\hat{\mu}_i^{r}$ is within $\epsilon_r/2$ of the true simplicial depth of point $x_i$, $\mu_i$, for all $i,r$ simultaneously with probability at least $1-\delta$.
We then argue that the simplicial median cannot be eliminated on this good event, and that all suboptimal points will be efficiently eliminated.
The proof of this theorem can be found in \Cref{app:additionalProofs}.

We note that the factor of $2$ in front of the \ecost \ term can be made arbitrarily close to 1 by lowering the threshold for exact computation.
This improves the leading constant of the worst case sample complexity of our algorithm but reduces its adaptivity, hampering its instance dependent performance.
For example, if the exact simplicial depth of a point is computed if $t_r \ge c \times \ecost$ for $c\in(0,1]$ in \Cref{line:exactCompAlg} in \Cref{alg:adaDepth}, then the computation done for a point is at most $(1+c)\times \ecost$.
The gap dependent term will naturally suffer however, as points with larger gaps will now also be exactly computed. Up to logarithmic factors, all point with $\Delta_i^{-2} > c\times \ecost$ will have their means exactly computed.

Using the modularity of our proposed algorithms and incorporating the optimized geometric exact computation methods developed by \citet{rousseeuw1996algorithm} and \citet{cheng2001algorithms}, we have the following corollary. %
\begin{cor} \label{cor:d23}
Algorithm \ref{alg:adaDepth} succeeds with probability at least $1-\delta$ in returning the simplicial median of the data set $\CD$, requiring time at most
\begin{equation*}
O\left(\sum_{i=1}^n \min\left(\frac{\log \left(\frac{n}{\delta} \log^2 \left(\frac{1}{\Delta_i}\right)\right)}{\Delta_i^2}, n \log(n) \right)\right),
\end{equation*}
for $d=2$ by utilizing the exact computation scheme in \citep{rousseeuw1996algorithm}, and
\begin{equation*}
O\left(\sum_{i=1}^n \min\left(\frac{\log \left(\frac{n}{\delta} \log^2 \left(\frac{1}{\Delta_i}\right)\right)}{\Delta_i^2}, n^{d-1} \right)\right),
\end{equation*}
for $d\ge3$ by utilizing the exact computation scheme in \citep{pilz2020crossing}.
\end{cor}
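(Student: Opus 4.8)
The plan is to derive \Cref{cor:d23} as a direct specialization of \Cref{thm:main}, obtained by substituting the concrete runtimes of the best known exact simplicial-depth algorithms for the generic cost $\ecost$. \Cref{thm:main} already establishes both the correctness claim (success with probability at least $1-\delta$) and the generic complexity bound
\begin{equation*}
n + \sum_{i=1}^n \min\left(\frac{32 \log\left(\frac{16n}{\delta}\log_2^2\left(\frac{2}{\Delta_i}\right)\right)}{\Delta_i^2}, 2\times\ecost\right),
\end{equation*}
for any exact method plugged in as \edepth. Since the subroutines we invoke return the true value $D_n(x_i)$, the $1-\delta$ correctness guarantee is inherited verbatim, so only the complexity bound requires attention.

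Next I would fix the exact subroutine for each regime and record its cost. For $d=2$ I would take \edepth\ to be the algorithm of \citet{rousseeuw1996algorithm}, which computes the exact simplicial depth of a single query point in $\ecost = O(n\log n)$ time; for $d\ge 3$ I would take the crossing-based scheme of \citet{pilz2020crossing}, which runs in $\ecost = O(n^{d-1})$ time. Substituting these into the second argument of each min and absorbing the factor of $2$ yields the $n\log n$ and $n^{d-1}$ terms appearing in the statement.

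It then remains to convert the first (approximation) argument of the min, which counts barycentric-coordinate tests, into a runtime and to clean up constants. Each barycentric test amounts to solving a $d\times d$ linear system and so costs $O(d^3)=O(1)$ time for fixed $d$ (as noted in \Cref{sec:form}); hence the per-point sample count $32\log(\cdots)/\Delta_i^2$ equals the corresponding runtime up to an absorbed constant. The remaining cosmetic simplifications---dropping the multiplicative $32$, replacing $\log_2$ by the natural $\log$ (a constant-factor change inside the outer logarithm), and absorbing the constants $16$ and $2$ inside the logarithms---are all swallowed by the $O(\cdot)$. Finally, the additive $n$ outside the sum can be dropped because each of the $n$ summands is bounded below by a positive constant (the first argument of each min is at least $32$, since $\Delta_i\le 1$ forces $\Delta_i^{-2}\ge 1$ and the logarithm exceeds $1$), so $\sum_{i=1}^n(\cdots)=\Omega(n)$ already dominates it.

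There is no genuine mathematical obstacle here; the corollary is essentially bookkeeping layered on top of \Cref{thm:main}. The only points requiring care are (i) verifying that the cited geometric algorithms indeed compute the exact depth of a \emph{single} point, which is precisely the interface \edepth\ requires, so that they are legitimate drop-in choices, and (ii) tracking the $O(d^3)$ per-test cost together with the assorted constants so that the final $O(\cdot)$ expressions match the statement cleanly.
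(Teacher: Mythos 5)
Your proposal is correct and matches the paper's treatment: the paper states \Cref{cor:d23} as an immediate specialization of \Cref{thm:main}, obtained by plugging in $\ecost = O(n\log n)$ from \citet{rousseeuw1996algorithm} for $d=2$ and $\ecost = O(n^{d-1})$ from \citet{pilz2020crossing} for $d\ge 3$, with correctness inherited directly. Your additional bookkeeping (constant per-test cost for fixed $d$, absorbing constants, and dropping the additive $n$ since the sum is already $\Omega(n)$) is exactly the implicit content of the paper's argument.
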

These results can be extended to the additive and multiplicative approximation settings, where a point $x_i \in \CD$ is desired such that $D_n(x_i) \ge \max_{j \in [n]} D_n(x_j)-\epsilon$  (respectively $D_n(x_i) \ge (1-\epsilon)\max_{j \in [n]} D_n(x_j)$), by modifying our algorithm's termination condition.

Examining the requisite sample complexity of \Cref{alg:adaDepth}, we see that \Cref{thm:main} provides a general guarantee for all data sets in an instance dependent manner, where all constants are explicit.
To gain insight as to how this computational complexity compares with existing algorithms (which have a sample complexity independent of the gaps) we evaluate the guarantee of \Cref{thm:main} when the gaps between the simplicial depths of the points follow a power law distribution.
We show that this is experimentally a reasonable assumption via experiments on synthetic data sets in \Cref{fig:2dGapHist}.

\clearpage
\begin{cor} \label{cor:powerLaw}
Assume that a data set $\CD = \{x_i\}_{i=1}^n \subset \R^d$ is randomly generated such that $\Delta_i \underset{i.i.d.}{\sim} \Delta$, where $F(\Delta) = \Delta^\alpha$ for $\Delta \in [0,1]$, with constant $\alpha \in [0, \infty)$.
Given this data set, \Cref{alg:adaDepth} will with probability at least $1 - \delta$ identify the simplicial median of the data set, requiring $M$ computations on this success event where, taking the expectation with respect to these $\Delta_i$, we have
\begin{equation}
    \E\{M\} \le 
    \begin{cases} 
      O\left(n\log \left(nd/\delta\right) \ecost^{1-\alpha/2}\right), & \textnormal{for }\alpha \in [0,2), \\
      O\left(n \log \left(nd/\delta\right) \log (\ecost) \right), & \textnormal{for } \alpha = 2, \\
      O\left(n \log \left(nd/\delta\right) \right), & \textnormal{for } \alpha > 2. 
   \end{cases}
\end{equation}
\end{cor}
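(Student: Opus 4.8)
The plan is to take the instance-dependent bound of \Cref{thm:main} and average it over the random gaps. Since the $\Delta_i$ are i.i.d.\ with common distribution $\Delta$, linearity of expectation collapses the sum, giving
$$\E\{M\} \le n + n\,\E\left[\min\left(\frac{32\log\left(\frac{16n}{\delta}\log_2^2\left(\frac 2\Delta\right)\right)}{\Delta^2},\, 2\ecost\right)\right],$$
so the entire corollary reduces to estimating a single scalar expectation, call it $\E[g(\Delta)\wedge 2\ecost]$, where $g(\Delta)$ denotes the gap-dependent term. The density of $\Delta$ on $[0,1]$ associated with the stated CDF $F(\Delta)=\Delta^\alpha$ is $f(\Delta)=\alpha\Delta^{\alpha-1}$, and the three regimes $\alpha<2$, $\alpha=2$, $\alpha>2$ in the statement will emerge from the behaviour of $\int \Delta^{\alpha-3}\,d\Delta$ near the origin.

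The core of the argument is a split of this expectation at a threshold $s$ chosen so that the two branches of the minimum balance, namely $s \asymp \sqrt{\ecost^{-1}\log(n/\delta)} = \tilde O(\ecost^{-1/2})$. For any fixed $s$ one has the elementary bound
$$\E[g(\Delta)\wedge 2\ecost] \le 2\ecost\,\P(\Delta \le s) + \E\big[g(\Delta)\,\mathds 1\{\Delta > s\}\big],$$
using $\min(a,b)\le b$ on the low-gap event and $\min(a,b)\le a$ on the high-gap event. This side-steps having to locate the exact crossover of $g(\Delta)=2\ecost$, which is awkward because the logarithmic factor $\log(\cdots\log_2^2(2/\Delta))$ itself depends on $\Delta$. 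The first term is $2\ecost\,s^\alpha \asymp \log^{\alpha/2}(n/\delta)\,\ecost^{1-\alpha/2}$. For the second term I would exploit that $g$ is decreasing, so $\log_2(2/\Delta)\le\log_2(2/s)$ throughout the region $\{\Delta>s\}$; since $1/s=\tilde O(\ecost^{1/2})$ is polynomial in $n$ and $d$ (recall $\ecost=O(n^{d-1})$), the whole logarithmic prefactor is $O(\log(nd/\delta))$ and can be pulled out of the integral.

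What remains is the pure power integral $\int_s^1 \Delta^{-2}f(\Delta)\,d\Delta = \alpha\int_s^1 \Delta^{\alpha-3}\,d\Delta$, whose value drives the case split. For $\alpha<2$ the integrand blows up at the lower limit and the integral is $\Theta(s^{\alpha-2})=\tilde\Theta(\ecost^{1-\alpha/2})$, matching the first term and yielding the $\ecost^{1-\alpha/2}$ rate; for $\alpha=2$ it is $\Theta(\log(1/s))=\Theta(\log\ecost)$, producing the extra $\log(\ecost)$ factor; and for $\alpha>2$ it converges to the constant $\alpha/(\alpha-2)$, so the bound saturates at $O(\log(nd/\delta))$ per point. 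Multiplying through by $n$ and folding in the leading $+n$ from \Cref{thm:main} gives the three stated bounds.

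The main obstacle I anticipate is bookkeeping the $\Delta$-dependent logarithmic factor cleanly: a naive attempt to integrate $\log(\log_2^2(2/\Delta))/\Delta^2$ against $\Delta^{\alpha-1}$ directly is messy, and one must be careful that the $\log\log(1/\Delta)$ contribution near $\Delta=s$ (where the power integrand is largest for $\alpha<2$) stays lower-order. The monotonicity-based bound $\log_2(2/\Delta)\le\log_2(2/s)=O(\log\ecost)$ on the high-gap region is what keeps this controlled, reducing the whole prefactor to $O(\log(nd/\delta))$; verifying that this absorption is valid across all three regimes is the one genuinely delicate point, the rest being the routine power-integral evaluations above.
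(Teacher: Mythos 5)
Your proposal is correct and follows essentially the same route as the paper's proof: average the bound of \Cref{thm:main} over the i.i.d.\ gaps, split the expectation of the minimum at a threshold of order $\ecost^{-1/2}$, bound the $\Delta$-dependent logarithmic prefactor on the high-gap region by $O(\log(nd/\delta))$ using $1/\Delta \le \tilde O(\ecost^{1/2}) = \mathrm{poly}(n)$, and evaluate $\alpha\int_s^1 \Delta^{\alpha-3}\,d\Delta$ in the three regimes. The only cosmetic difference is your threshold $s \asymp \sqrt{\ecost^{-1}\log(n/\delta)}$ versus the paper's exact $\ecost^{-1/2}$, which changes nothing beyond logarithmic factors already absorbed in the stated bounds.
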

The proof of this corollary follows by evaluating the sample complexity in \Cref{thm:main} with respect to the random gaps, and is relegated to \Cref{app:powerLaw}.

This shows the clear dependence of our algorithm's sample complexity on the distribution of the gaps.
We see that if many arms gaps are small, corresponding to small $\alpha$, then the algorithm's runtime is near $\ecost^{1-\alpha/2}$ per point in expectation; almost that of the brute force method that exactly computes each simplicial depth.
As $\alpha$ increases our per arm cost decreases, as many arms become easier to eliminate, up until the cost becomes independent of $\alpha$ when $\alpha>2$, as then so few points require exact computation that the dependence on \ecost\ vanishes.

\subsection{Extension to Coarse Ranking} \label{sec:topk}
In practice, one may be interested in finding more than just the simplicial median.
One broad class of objectives of interest are coarse ranking problems, where given as input $\{m_i\}$ where $0=m_0<m_1<\ldots<m_\ell=n$, we wish to coarsely cluster the points by their simplicial depth.
This entails partitioning the $n$ points into $\ell$ clusters, such that the first cluster contains the $(m_1-m_0)$ points with deepest depth, the second cluster contains the following $(m_2-m_1)$ in terms of depth, and so on.
Note that this formulation subsumes simplicial median identification $(\ell=2,m_1=1)$, $k$-outlier detection $(\ell=2,m_1=n-k)$, and sorting the points by simplicial depth $(\ell=n,m_i=i \text{ for } 1\le i \le n)$. 
Round and sample efficient algorithms were recently proposed for the traditional stochastic multi-armed bandit formulation of this problem by \citet{karpov2020batched}.
This has ties to earlier work in the simulation optimization literature, where works exploited the fact that the task of ranking points can be significantly cheaper than approximating all of their depths to high accuracy \citep{ho1992ordinal}.
At a high level, these coarse ranking algorithms maintain a set of active arms and proceed in rounds of doubling accuracy, pulling each active arm uniformly.
When an arm has been pulled sufficiently such that its confidence interval does not overlap with the confidence intervals of any arm in an alternative cluster (i.e. it has been uniquely identified as belonging to a certain cluster), it is removed from the active set (all arms start as active).
For the sake of simplicity and concreteness, in the remainder of this section we consider the objective of finding the $k$ points with largest simplicial depth, but these ideas naturally extend to the general coarse ranking setting.

\clearpage
\begin{thm}[Top-$k$ data depth] \label{thm:simplicialTopk}
Algorithm \ref{alg:adaDepthAlpha} will succeed with probability at least $1-\delta$ in returning a set $A$ with $|A| =k$ such that $\{i : \mu_i > \mu_{k}+\epsilon\} \subseteq A \subseteq \{i : \mu_i \ge \mu_{k} - \epsilon\}$, requiring time at most
\vspace{-.2cm}
\begin{equation*}
O\left( \sum_{i=1}^n \min\left(\frac{\log \left( \frac{n}{\delta} \log \left(1/\Delta_i^{(k)}\right)\right)}{\left(\Delta_i^{(k)}\right)^2}, \ecost \right)\right)
\end{equation*}
\vspace{-.2cm}
where $\Delta_i^{(k)} = \max(\mu_i - \mu_{k+1},\epsilon)$ if $i\le k$ and $\Delta_i^{(k)} = \max(\mu_k - \mu_{i},\epsilon)$ if $i> k$.
\end{thm}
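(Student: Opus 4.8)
The plan is to reuse the three-part template that establishes \Cref{thm:main}, adapting it to the two-sided threshold that defines the top-$k$ problem: first isolate a single high-probability \emph{good event} on which every depth estimate is accurate, then verify correctness of the returned set on that event, and finally bound the per-point computation by the round at which each arm separates from the boundary between the $k$-th and $(k{+}1)$-th deepest points.

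For the good event, observe that each estimate $\hat\mu_i^r$ formed in round $r$ is an average of $t_r$ i.i.d.\ Bernoulli indicators with mean $\mu_i$. Hoeffding's inequality gives $\Pr(|\hat\mu_i^r - \mu_i| > \epsilon_r/2) \le 2\exp(-t_r\epsilon_r^2/2)$, and the schedule $t_r = \lceil 2\epsilon_r^{-2}\log(4nr^2/\delta)\rceil$ makes this at most $\delta/(2nr^2)$. A union bound over all $n$ points and all rounds $r\ge 1$, using $\sum_{r\ge 1} r^{-2} < 2$, shows that with probability at least $1-\delta$ we have $|\hat\mu_i^r-\mu_i|\le \epsilon_r/2$ simultaneously for every $i$ and $r$; call this event $\mathcal E$ and run the rest of the argument deterministically on $\mathcal E$.

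The correctness step is where the analysis departs from the best-arm case, and I expect it to be the main obstacle. The relevant threshold is no longer a single known arm but the a priori unknown boundary between depths $\mu_k$ and $\mu_{k+1}$, which the algorithm can only access through estimates. On $\mathcal E$, an arm with $\mu_i > \mu_k + \epsilon$ eventually has its lower confidence bound exceed the upper confidence bounds of all but at most $k-1$ arms and is forced into $A$, while an arm with $\mu_i < \mu_k - \epsilon$ is symmetrically forced out; I would first check these two forcing rules are consistent, so that no arm is ever forced both ways on $\mathcal E$. The arms in the band $\{i : \mu_k - \epsilon \le \mu_i \le \mu_k + \epsilon\}$ are precisely those the procedure may place on either side, and since arms $1,\dots,k$ each satisfy $\mu_i \ge \mu_k \ge \mu_k - \epsilon$, there are always at least $k$ forced-in-or-band candidates and at most $k$ forced-in ones. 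Filling $A$ with the forced-in arms plus enough band arms to reach cardinality exactly $k$ therefore yields $\{i:\mu_i>\mu_k+\epsilon\}\subseteq A\subseteq\{i:\mu_i\ge\mu_k-\epsilon\}$; the $\epsilon$-floor built into $\Delta_i^{(k)}$ is exactly what guarantees band arms never prevent termination.

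For the sample complexity, I would show on $\mathcal E$ that an arm with $i\le k$ is permanently classified once the round accuracy satisfies $\epsilon_r \lesssim \mu_i - \mu_{k+1}$, since then its confidence interval of width $\epsilon_r$ no longer overlaps that of arm $k{+}1$; the symmetric statement with gap $\mu_k - \mu_i$ holds for $i>k$. Taking the floor at $\epsilon$ replaces the raw gap by $\Delta_i^{(k)}$, so arm $i$ is eliminated by round $r_i = \lceil \log_2(1/\Delta_i^{(k)})\rceil$ having incurred $t_{r_i} = O\big((\Delta_i^{(k)})^{-2}\log(\tfrac{n}{\delta}\log(1/\Delta_i^{(k)}))\big)$ indicator evaluations. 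As in \Cref{alg:adaDepth}, the moment $t_r$ would exceed $\ecost$ the procedure switches to exact computation of the remaining depths, capping each arm's cost at $\ecost$ and producing the $\min(\cdot,\ecost)$ inside the sum; summing over $i$ gives the stated bound.
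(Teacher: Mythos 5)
Your proposal follows essentially the same route as the paper's own proof: the identical Hoeffding-plus-union-bound good event (the paper's \Cref{lem:conf}), the same correctness invariants (arms above $\mu_k+\epsilon$ forced into $A$, arms below $\mu_k-\epsilon$ forced out, band arms filling the remaining slots), and the same per-arm elimination round $r_i$ driven by the $\epsilon$-floored effective gap $\Delta_i^{(k)}$ with the $\ecost$ cap yielding the $\min$ in the sum. The argument is sound and matches the paper's level of rigor, so there is nothing substantive to flag.
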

The proof of this theorem follows similarly to that of Theorem \ref{thm:main}, and is relegated to \Cref{app:topk} for continuity.

\begin{algorithm}[t]
  \caption{\texttt{Adaptive top-$k$ Data Depth} \label{alg:adaDepthAlpha}}
\begin{algorithmic}[1] 
    \State \textbf{Input: } data set $\CD = x_1,\hdots,x_n\in\R^d$, error probability $\delta$, target accuracy $\epsilon$, integer $k$,  exact computation method \edepth\ with runtime $\ecost$
    \State \textbf{Initialize: } $S_1 \gets [n]$, $r=0$,  $A\gets \emptyset$
    \Repeat
    \State $r\gets r+1$
    \State Set $\epsilon_r = 2^{-r},t_r=\lceil2\epsilon_r^{-2}\log(4nr^2/\delta) \rceil$
    \If{$t_r \ge \ecost$} \Comment{Use specialized exact method}
    \State $\hat{\mu}_i^r \gets \edepth(x_i,\CD)$ for $i \in S_r$
    \State \textbf{Break: go to line \ref{line:topKIf}}
    \EndIf
    \State Select $X_1,\hdots,X_{t_r - t_{r-1}}$ independently and u.a.r. as subsets of size $d+1$ of $\CD$
    \State For $i\in S_r$, $j\in[t_r]$, compute $Y_{i,j}=\mathds{1}\{x_i\in \Delta(X_{j})\}$
    \State For $i \in S_r$ construct $\hat{\mu}_i^r$ from all previous $Y_{i,j}$, let $\hat{\mu}_\text{thresh}^r = \hat{\mu}_{(k - |A|)}^r$
    \State Update $A \gets A \cup \{i \in S_r : \hat{\mu}_i^r \ge \hat{\mu}_\text{thresh}^r + \epsilon_r\}$
    \State Set $S_{r+1}\gets S_r \setminus \{i \in S_r : |\hat{\mu}_i^r - \hat{\mu}_\text{thresh}^r| \ge \epsilon_r\} $
  \Until $\epsilon_r\le \epsilon/2$ or $|A| =k$
  \If{$|A| < k$} \label{line:topKIf}
  \State Add $k - |A|$ top $\hat{\mu}_i^r$ for $i \in S_r$
  \EndIf
  \State \Return $A$
\end{algorithmic}
\end{algorithm}

While we state \Cref{thm:simplicialTopk} in terms of finding the $k$ points with deepest simplicial depth, similar theoretical guarantees and efficient algorithms can be provided for similar identification and estimation tasks (e.g. identifying the $k$ points with smallest simplicial depth, the outliers of a data set) or partitioning the data into sets (deepest 10\% of points and shallowest 5\% of points).
More generally, considering the order statistics of the sample simplicial depths of the $n$ points, these methods allow for efficient identification or coarse ranking of the points in terms of their simplicial depths. Variable sized outputs can also be efficiently obtained (identifying all points with simplicial depth at least $80\%$ of that of the simplicial median), as discussed in \Cref{sec:adaTasks}.
These same techniques can also be used to estimate L-statistics, efficiently so if the desired linear combination is sparse.
We discuss a meta-algorithm to accomplish these tasks in the following subsection.

\subsection{Meta Algorithm} \label{sec:meta}

At a high level, the previous two proposed algorithms can be abstracted as in \Cref{alg:adaDepthMeta}, which encompasses many other common objectives.
The goal of this meta algorithm is to adaptively estimate the simplicial depth of each point to only the \textit{necessary} accuracy for the task at hand, allowing it to be much more computationally efficient than a naive algorithm which exactly computes the simplicial depth of each point.
This is accomplished by proceeding in rounds, where in each round a given \adepth\ function is used to approximate the depth of the relevant points $x_i$ to an additive tolerance of $\epsilon_r$ which decreases over the rounds $r$, with error probability at most $\delta/(2nr^2)$.
For approximating simplicial depth, this can naturally be accomplished by randomly subsampling simplices and checking to see if the query point $x_i$ is contained within them, as utilized in \Cref{alg:adaDepth}, or by alternative techniques as those of \citet{afshani2015approximating}.
At the end of round $r$, a set of arms $E_r$ is eliminated.
This should be thought of as the set of arms for which the algorithm has already received enough information regarding their means, and no longer needs to refine their estimates.
In the context of top-$k$ identification, this constitutes the points which have been determined to have too shallow of a simplicial depth so as to with high probability not be in the top-$k$, and those points which have been determined to be sufficiently deep that they are with high probability in the top-$k$.
For simplicial median computation this decision is one-sided.
For coarse-ranking this elimination procedure is necessarily slightly more sophisticated as discussed by \citet{karpov2020batched}.

\begin{algorithm}[h]
  \caption{\texttt{Adaptive Data Depth Meta Algorithm} \label{alg:adaDepthMeta}}
\begin{algorithmic}[1] 
    \State \textbf{Input: } data set $\CD = x_1,\hdots,x_n\in\R^d$, error probability $\delta$, exact computation method \edepth\ with runtime $\ecost$, approximation method \adepth\ with runtime $\acost(\epsilon,\delta)$
    \State \textbf{Initialize: } $S_1 \gets [n]$, $r=0$
    \Repeat
    \State $r\gets r+1$
    \State Set $\epsilon_r = 2^{-r}$
    \If{$\acost(\epsilon_r/2,\delta/(2nr^2)) > \ecost$} \Comment{Use specialized exact method} \label{line:exactCompMeta}
    \State Exactly compute $\hat{\mu}_i^r \gets \edepth(x_i,\CD)$ for $i \in S_r$
    \State \textbf{Break}
    \EndIf
    \State For $i \in S_r$ compute $\hat{\mu}_i^r\gets \adepth(x_i,\CD,\epsilon_r/2,\delta/(2nr^2))$
    \State Construct $E_r\subseteq S_r$ as set of arms to eliminate, based on $\{\hat{\mu}_i^r\}$
    \State Set $S_{r+1}\gets S_r \setminus E_r $
  \Until \texttt{terminationCondition}($\{\hat{\mu}_i^r\},\epsilon,\epsilon_r$)
  \State \Return Solution as a function of $S_r,\{\hat{\mu}_i^r\}$
\end{algorithmic}
\end{algorithm}

Extending the analysis of \Cref{alg:adaDepth} to \Cref{alg:adaDepthMeta}, we are able to provide the following Theorem regarding the performance of our meta algorithm when specified to simplicial median computation, using  general methods for exact and approximate computation \edepth\ and \adepth\ with runtimes $\ecost$ and $\acost(\epsilon,\delta)$ respectively.

\clearpage
\begin{thm}[Meta Algorithm]\label{thm:metaAlg}
Algorithm \ref{alg:adaDepthMeta} succeeds with probability at least $1-\delta$ in returning a point $x_i\in \CD$ with simplicial depth within an additive $\epsilon$ of the simplicial median's depth, requiring computation at most
\begin{equation*}
\sum_{i=1}^n \left[\left( \sum_{r=1}^{r_i-1}\acost \left(2^{-r-1},\delta/(2nr^2)\right) \right)+\min\left(\acost\left(\frac{\Delta_i}{8},\frac{\delta}{2nr_i^2}\right), \ecost\right)\right],
\end{equation*}
where %
$r_i=\min \left(\{r : r\in\N, \acost(2^{-r-1},\delta/(2nr^2)) \ge \ecost\} \cup \{\lceil \log (2/\epsilon)\rceil,\lceil \log_2 (2/\Delta_i) \rceil\}\right)$.
\end{thm}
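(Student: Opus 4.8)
The plan is to mirror the proof of \Cref{thm:main}, specializing the abstract elimination step of \Cref{alg:adaDepthMeta} to the one-sided simplicial-median criterion of \Cref{alg:adaDepth}. First I would define the good event $\CG$ on which every approximation returned by \adepth\ meets its requested tolerance: in round $r$ each active point $x_i$ satisfies $|\hat\mu_i^r - \mu_i| \le \epsilon_r/2 = 2^{-r-1}$. Since the round-$r$ call uses failure probability $\delta/(2nr^2)$ and there are at most $n$ active points per round, a union bound over points and rounds controls the total failure probability by $\sum_{r\ge1} n \cdot \frac{\delta}{2nr^2} = \frac{\delta}{2}\sum_{r\ge1}r^{-2} = \frac{\pi^2}{12}\delta \le \delta$, establishing $\P(\CG)\ge 1-\delta$.

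On $\CG$ I would argue correctness exactly as for the median. Because $\mu_1$ is the largest depth, the running maximum estimate obeys $\hat\mu_*^r \le \mu_1 + \epsilon_r/2$, while the median's own estimate satisfies $\hat\mu_1^r \ge \mu_1 - \epsilon_r/2 \ge \hat\mu_*^r - \epsilon_r$, so the simplicial median is never placed in $E_r$ and survives until the loop halts. When \adepth\ is swapped for \edepth\ in \Cref{line:exactCompMeta} the returned depths are exact, so on $\CG$ the reported point has depth within the additive $\epsilon$ of $\mu_1$, giving the stated guarantee.

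For the complexity I would track, for each point $x_i$, the round $r_i$ at which its refinement stops. On $\CG$ a suboptimal point is eliminated as soon as $\epsilon_r < \Delta_i/2$, since then $\hat\mu_*^r - \hat\mu_i^r \ge \Delta_i - \epsilon_r > \epsilon_r$; this occurs by round $\lceil\log_2(2/\Delta_i)\rceil$. The loop additionally halts once $\epsilon_r \le \epsilon/2$ (round $\lceil\log_2(2/\epsilon)\rceil$), and the exact branch fires at the first $r$ with $\acost(2^{-r-1},\delta/(2nr^2)) \ge \ecost$. Taking $r_i$ to be the minimum of these three quantities captures whichever cause halts refinement of $x_i$ first, matching the stated definition. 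Summing the per-round costs $\acost(2^{-r-1},\delta/(2nr^2))$ over rounds $1,\dots,r_i-1$ and adding the final-round cost yields the per-point bound: in round $r_i$ the tolerance $\epsilon_{r_i}/2 = 2^{-r_i-1}$ is at least $\Delta_i/8$ (since $r_i \le \lceil\log_2(2/\Delta_i)\rceil$), so by monotonicity of \acost\ the approximate cost is bounded by $\acost(\Delta_i/8,\delta/(2nr_i^2))$, while the algorithm instead pays $\ecost$ whenever the exact branch is cheaper, which is precisely the $\min$. Summing over $i$ gives the theorem.

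The main obstacle is twofold. First, the meta algorithm leaves the elimination set $E_r$ abstract, so I must instantiate it as the one-sided median rule and verify that this rule neither discards the median nor retains a suboptimal point beyond round $r_i$, which is exactly what pins down $r_i$ up to the rounding that produces the $\Delta_i/8$ tolerance. Second, I must reconcile the three competing halting causes—gap-based elimination, the global $\epsilon$-termination, and the exact-computation switch of \Cref{line:exactCompMeta}—so that each point is charged approximation effort only through round $r_i$ and exactly once for its final evaluation. Because \acost\ is decreasing in its accuracy argument, the constant-factor slack between $\epsilon_{r_i}/2$ and $\Delta_i/8$ is absorbed by monotonicity, leaving only the routine bookkeeping of the round-by-round sum.
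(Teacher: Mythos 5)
Your proposal is correct and follows essentially the same route as the paper's proof: a union bound over rounds and active points using the per-call failure probabilities $\delta/(2nr^2)$ to establish the good event, then an argument that each point's refinement halts by round $r_i$ (the minimum over the gap-based, $\epsilon$-based, and exact-computation halting causes), with the round-$r_i$ cost bounded via monotonicity of \acost\ by $\min\bigl(\acost(\Delta_i/8,\delta/(2nr_i^2)),\ecost\bigr)$. Your write-up is in fact slightly more explicit than the paper's, which argues the round-$r_i$ bound by contradiction and leaves the median-survival step implicit by reference to \Cref{thm:main}.
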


To simplify this expression, we make the assumption that $2\times\acost(\epsilon,\delta) \le \acost(\epsilon/2,\delta)$ for all $\epsilon\le 1$, and that this function is increasing as $\delta$ decreases.
Note that this holds for our sampling based approximation scheme, with $4\times\acost(\epsilon,\delta)= \acost(\epsilon/2,\delta)$.
\begin{cor}\label{cor:metaAlg}
Algorithm \ref{alg:adaDepthMeta} succeeds with probability at least $1-\delta$ in returning the simplicial median of the data set, and assuming that $\acost(\epsilon/c,\delta)$ scales superlinearly in $c$ requires time at most
\begin{equation*}
\sum_{i=1}^n \left[ \min\left(2\times\acost\left(\frac{\Delta_i}{8},\frac{\delta}{2nr_i^2}\right), 3\times \ecost\right)\right],
\end{equation*}
where $r_i=\min \left(\{r : r\in\N, \acost(2^{-r-1},\delta/(2nr^2)) \ge \ecost\} \cup \{\lceil \log (2/\epsilon)\rceil,\lceil \log_2 (2/\Delta_i) \rceil\}\right)$.
\end{cor}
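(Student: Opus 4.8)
The plan is to treat Corollary~\ref{cor:metaAlg} purely as an algebraic simplification of the per-point cost appearing in Theorem~\ref{thm:metaAlg}: it suffices to show that for each $i$,
\begin{equation*}
\left(\sum_{r=1}^{r_i-1}\acost\left(2^{-r-1},\tfrac{\delta}{2nr^2}\right)\right)+\min\left(\acost\left(\tfrac{\Delta_i}{8},\tfrac{\delta}{2nr_i^2}\right),\ecost\right)\le \min\left(2\,\acost\left(\tfrac{\Delta_i}{8},\tfrac{\delta}{2nr_i^2}\right),3\,\ecost\right),
\end{equation*}
after which summing over $i$ gives the claim. The single lever I would use is the standing assumption $2\,\acost(\epsilon,\delta)\le \acost(\epsilon/2,\delta)$ together with monotonicity of $\acost$ (decreasing in $\epsilon$, increasing as $\delta\downarrow 0$); these turn the ``warm-up'' sum into a geometric series dominated by its last term.

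Write $T_r:=\acost(2^{-r-1},\delta/(2nr^2))$ for the $r$-th summand. First I would establish the geometric decay $T_r\le \tfrac12 T_{r+1}$: by the doubling assumption $\acost(2^{-r-2},\delta')\ge 2\,\acost(2^{-r-1},\delta')$ at the fixed level $\delta'=\delta/(2n(r+1)^2)$, and since $\delta/(2n(r+1)^2)\le \delta/(2nr^2)$, monotonicity in $\delta$ gives $\acost(2^{-r-1},\delta/(2n(r+1)^2))\ge \acost(2^{-r-1},\delta/(2nr^2))=T_r$; chaining these yields $T_{r+1}\ge 2T_r$. Summing the resulting geometric series gives $\sum_{r=1}^{r_i-1}T_r\le 2\,T_{r_i-1}$, where the sum is empty (hence $0$) when $r_i=1$.

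It then remains to bound the final term $T_{r_i-1}=\acost(2^{-r_i},\delta/(2n(r_i-1)^2))$ in the two ways demanded by the outer minimum. For the $3\,\ecost$ bound, note that $r_i-1$ lies strictly below the minimizing index $r_i$, so $r_i-1$ is not in the exact-computation threshold set $\{r:\acost(2^{-r-1},\delta/(2nr^2))\ge\ecost\}$, whence $T_{r_i-1}<\ecost$; thus $\sum_r T_r<2\,\ecost$, and adding the min term (at most $\ecost$) keeps the total below $3\,\ecost$. For the $2\,\acost(\Delta_i/8,\cdot)$ bound I would use $r_i\le \lceil\log_2(2/\Delta_i)\rceil$, which forces $2^{-r_i}>\Delta_i/4$ and hence $\Delta_i/8<2^{-r_i-1}$; combining $\epsilon$-monotonicity $\acost(\Delta_i/8,\delta')\ge \acost(2^{-r_i-1},\delta')$, the doubling $\acost(2^{-r_i-1},\delta')\ge 2\,\acost(2^{-r_i},\delta')$, and $\delta$-monotonicity (since $\delta/(2nr_i^2)\le \delta/(2n(r_i-1)^2)$) gives $\acost(\Delta_i/8,\delta/(2nr_i^2))\ge 2\,T_{r_i-1}\ge \sum_r T_r$. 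As the min term is also at most $\acost(\Delta_i/8,\cdot)$, the total is at most $2\,\acost(\Delta_i/8,\cdot)$.

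The main obstacle is not any single estimate but keeping the two-parameter monotonicity of $\acost$ straight while the level $\delta/(2nr^2)$ drifts with $r$, and in particular verifying that the constants line up exactly: the factor of $2$ lost to the geometric sum must be absorbed by the fact that $\Delta_i/8<2^{-r_i-1}$ lies a full dyadic level below $2^{-r_i}$, so that $\acost(\Delta_i/8,\cdot)$ exceeds $2\,T_{r_i-1}$ rather than merely $T_{r_i-1}$; this is precisely what the $\tfrac{\Delta_i}{8}$ (rather than $\tfrac{\Delta_i}{4}$) in the statement buys. The boundary case $r_i=1$ and the exact-median ($\epsilon=0$) specialization of Theorem~\ref{thm:metaAlg}, which drops the $\lceil\log(2/\epsilon)\rceil$ branch from the definition of $r_i$, should be noted but are routine.
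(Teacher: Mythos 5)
Your proposal is correct and takes essentially the same route as the paper: the paper presents \Cref{cor:metaAlg} as an immediate simplification of \Cref{thm:metaAlg} under the stated doubling assumption $2\times\acost(\epsilon,\delta)\le\acost(\epsilon/2,\delta)$, and your argument (geometric collapse of the warm-up sum to $2\,T_{r_i-1}$, then bounding that term by $\ecost$ via the threshold definition of $r_i$ and by $\tfrac12\acost(\Delta_i/8,\cdot)$ via $\Delta_i/8<2^{-r_i-1}$) is precisely that intended derivation, with the constants $2$ and $3$ accounted for correctly. The monotonicity of $\acost$ in $\epsilon$ at non-dyadic points that you invoke is subsumed by the corollary's superlinearity hypothesis, so no assumption beyond the paper's is needed.
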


This modularity allows for the incorporation of other approximation algorithms.
For example, the high-dimensional approximation scheme proposed by \citet{afshani2015approximating}  could be used (which works for relative approximations and so our meta-algorithm would need to be correspondingly modified).
The runtime of this scheme is $\tilde{O}(n^{d/2+1})$, and so one potential scheme could be to use random sampling of simplices to coarsely approximate the simplicial depth of points in early rounds, the more sophisticated geometric approach for approximating the depth of a point to a higher accuracy by \citet{afshani2015approximating} in later rounds for small $\epsilon$, and a specialized exact computation method for those points with near maximal simpicial depth.

\vspace{-.1cm}
\section{Numerical Simulations} \label{sec:numerics}
We simulate our adaptive simplicial median computation scheme on synthetic data sets and show the empirical performance improvement afforded by adaptivity.
By adaptively approximating the simplicial depth of points in our data set to the necessary accuracy, we are able to obtain significant computational improvements.
All experimental results can be reproduced from our publicly available code: \url{https://github.com/TavorB/adaSimplicialDepth}.

\begin{figure}[h]
\vspace{-.6cm}
    \centering
    \begin{subfigure}[b]{0.45\textwidth}
    \includegraphics[width=\textwidth]{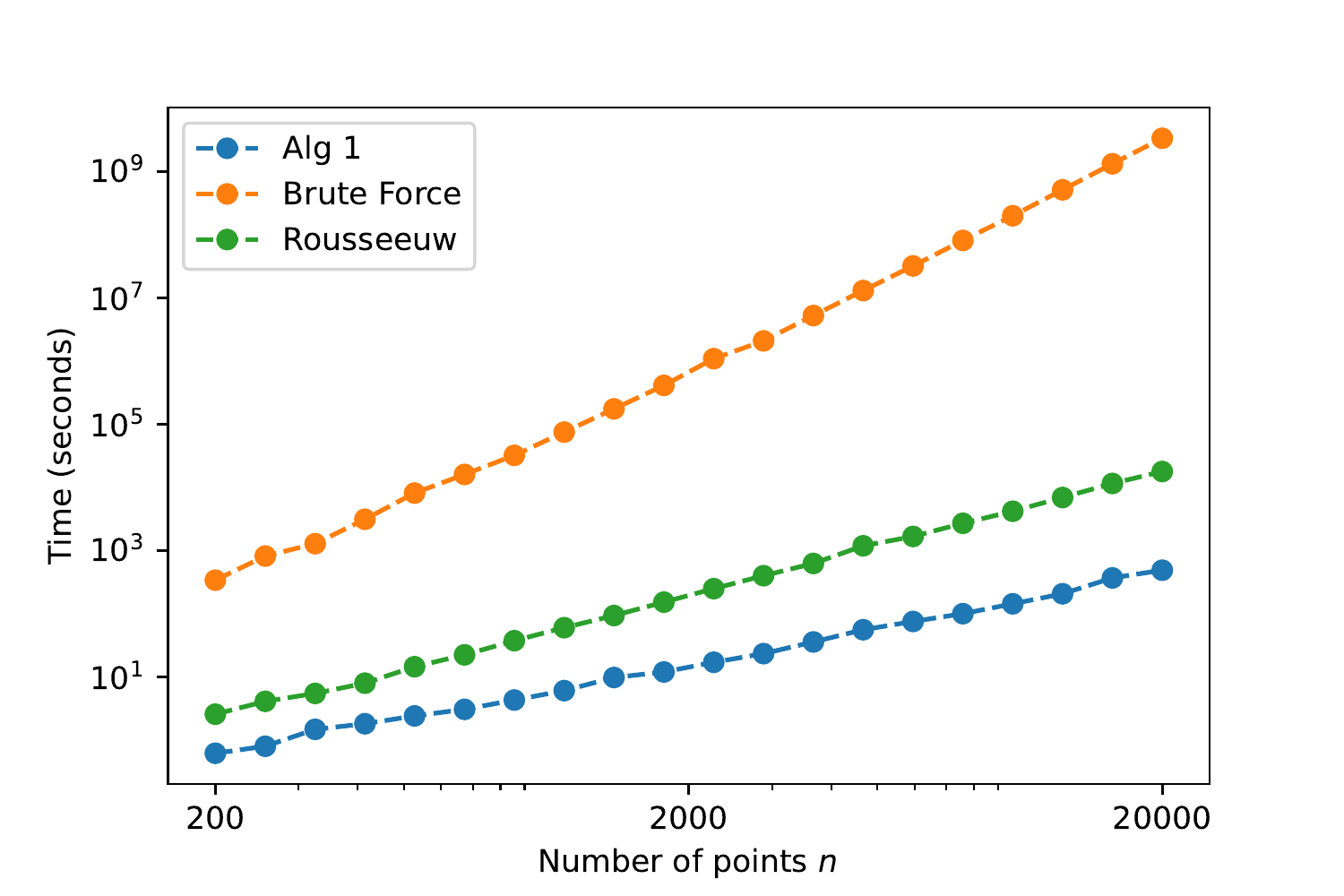}
    \subcaption{}
    \label{fig:2dBruteRutsAda}
    \end{subfigure}
    \begin{subfigure}[b]{0.45\textwidth}
    \includegraphics[width=\textwidth]{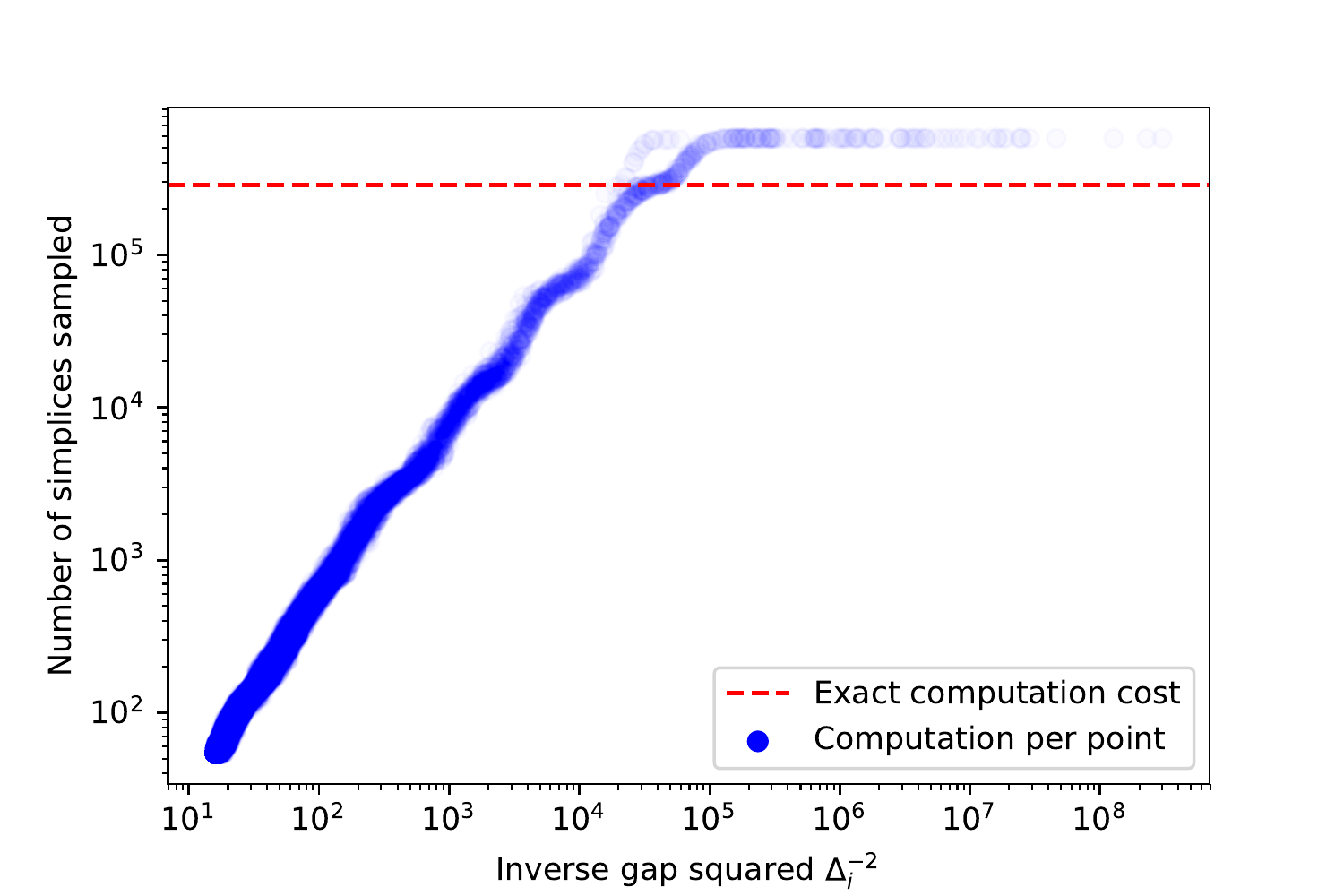}
    \subcaption{}
    \label{fig:2dnumPullsvsGap}
    \end{subfigure}
    \vspace{-.3cm}
    \caption{Simulations for simplicial median computation for $d=2$, data drawn from isotropic gaussian.
    a) shows wall clock time with 20 simulations per point; 10 random data sets, 2 trials per data set, further details in \Cref{app:expDetails}. Adaptive method run with $\delta= .001$ returns the correct answer on every run.
    b) shows the number of computations made by our adaptive algorithm on a given point as a function of inverse gap squared (in terms of simplicial depth) for $n=20,000$, averaged over 50 trials on the same randomly generated data set. Plotted in log-log scale to show points with small gaps. $R^2= .98$ between inverse gap squared $x$ and number of simplices sampled $y$ for points with $y<\ecost$, validating that this relationship is indeed nearly linear.}
    \vspace{-.6cm}
\end{figure}

As can be seen in \Cref{fig:2dBruteRutsAda}, our adaptive algorithm dramatically outperforms both exact computation and the geometric but nonadaptive solution of Rousseeuw and Ruts, yielding superior scaling with $n$.
Plotting the runtime of these different methods in log-log scale, we see that the brute force algorithm has a wall clock time scaling theoretically as $O(n^4)$, but empirically as roughly $O(n^{3.4})$, due to the efficiency of batch operations in computation of whether $n$ points are contained within a simplex, which practically scales sublinearly in $n$.
Rousseeuw's algorithm has a runtime which theoretically scales as $O(n^2)$, and is practically validated as such with an empirical slope of $2.0$.
Our algorithm has an instance dependent runtime, where for isotropic Gaussians the runtime scales approximately as $O(n^{1.5})$.
Full experimental details are laid out for reproducibility in \Cref{app:expDetails}.

We can see that the number of pulls required per arm scales as expected in \Cref{fig:2dnumPullsvsGap}. Plotting the number of computations required for point $x_i$ (which we denote as $T_i$) as a function of $\Delta_i^{-2}$, we see a linear relationship between $T_i$ and $\Delta_i^{-2}$ up until $T_i$ exceeds the threshold for exact computation, at which point the number of pulls required is constant.

Examining the reason for the dramatic gain in \Cref{fig:2dBruteRutsAda}, we see in \Cref{fig:2dGapHist} that only a very small fraction of points require exact computation of their simplicial depth.
This highlights the gain of our adaptive method; many points require only a very coarse approximation of their depth before they can be eliminated.
In these simulations, no errors were recorded for our adaptive scheme.
In all trials the threshold for deciding to compute a point's depth exactly was if more than $n\log n$ simplicies needed to be computed for it, i.e. $t_r \ge n\log n$. 
This, in concert with the fact that only approximate gaps based on the algorithm's output were plotted, lead to a jagged end behavior in \Cref{fig:2dGapHist}.

\begin{figure}[h]
    \vspace{-.2cm}
    \centering
    \begin{subfigure}[b]{0.45\textwidth}
    \includegraphics[width=\textwidth]{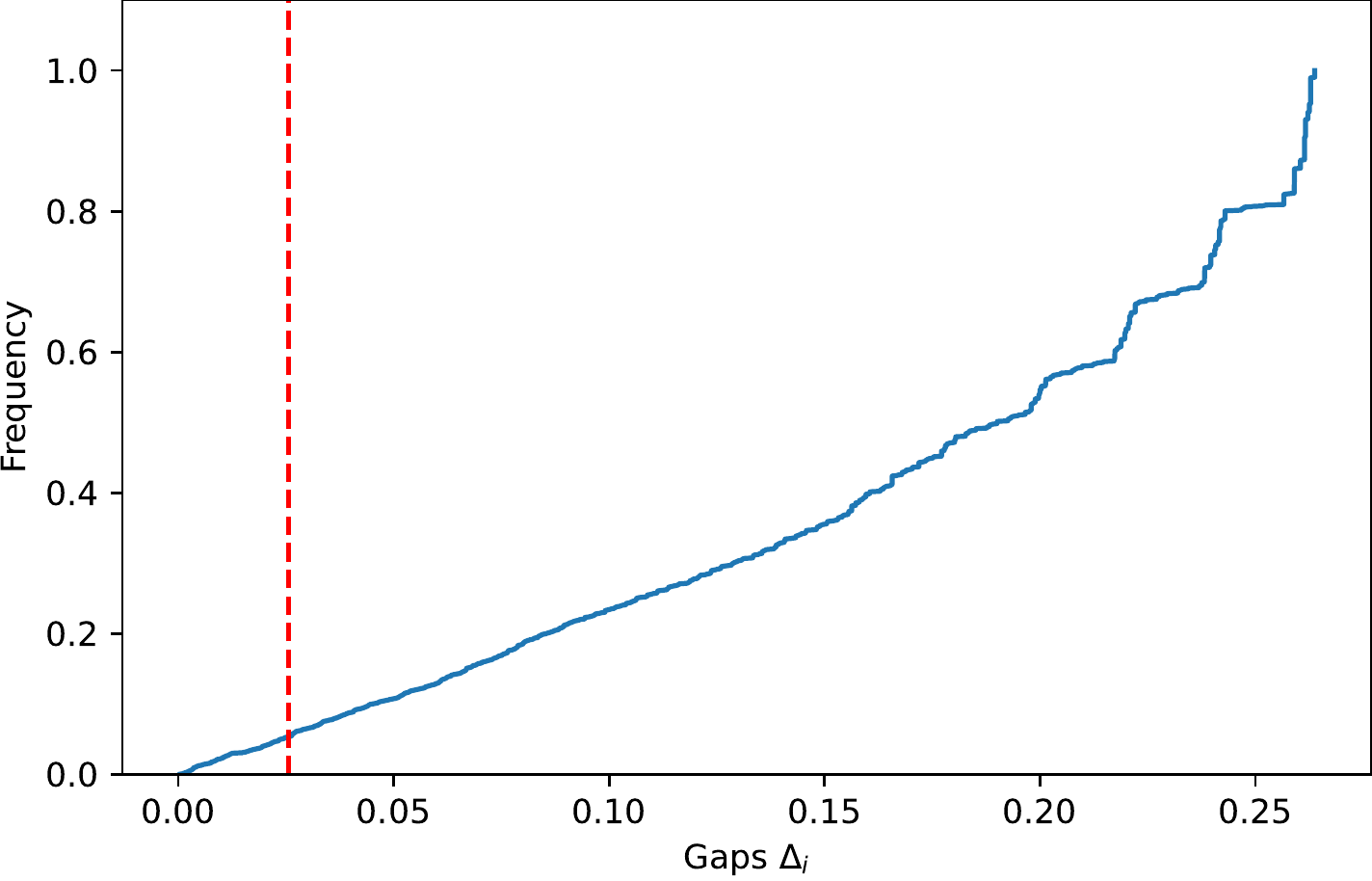}
    \subcaption{}
    \end{subfigure}
    \quad
    \begin{subfigure}[b]{0.45\textwidth}
    \includegraphics[width=\textwidth]{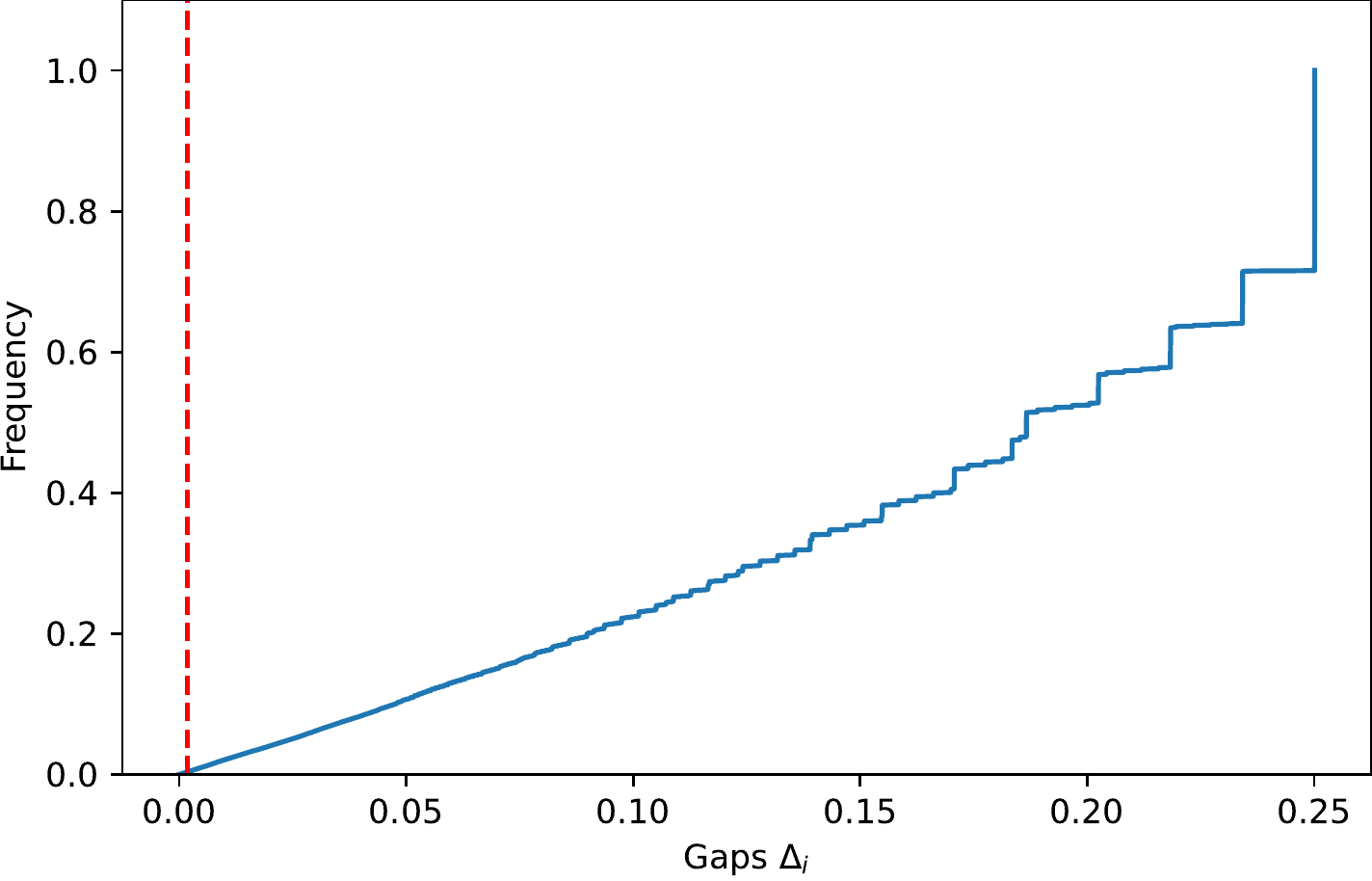}
    \subcaption{}
    \end{subfigure}
    \vspace{-.3cm}
    \caption{CDF of gaps when data is drawn from an isotropic Gaussian in $d=2$ for a) $n=200$, b) $n=20k$. Red line indicates exact computation threshold. Averaged over 10 random instances, with 2 trials per instance, gaps approximately computed using our adaptive method. Fitting (a) to a power law distribution yields $\alpha=1.15$ and (b) similarly yields $\alpha=1.16$, showing the dramatic improvement of our method, where our gain increases as $\alpha$ gets larger.}\label{fig:2dGapHist}
\end{figure}

We see in \Cref{fig:2dGapHist} the gap distribution for $n=200$ and $n=20k$, and observe that they are quite similar.
This trend held for all $n$ tested, and so only the endpoints were shown; the rest of the plots can be found in \Cref{fig:2dGapHistApp} in the Appendix.
Observe that the maximum gap stays roughly constant at $D_n(x_i)\approx .25$.
This is because for rotationally symmetric distributions, like the isotropic Gaussian we utilize, the probability that the origin is contained within the triangle constructed from 3 randomly drawn samples from this distribution is $1/4$.
Since many samples are drawn from this 2-dimensional distribution, the simplicial median is close to the origin, and has a simplicial depth of approximately $1/4$, compared with points on the outside of the distribution which have a simplicial depth close to 0.
One natural fit for the empirical CDF in \Cref{fig:2dGapHist} is a power law distribution, as discussed in \Cref{cor:powerLaw}.

\section{Discussion} \label{sec:disc}
With our base algorithm and its extensions in place, we discuss two additional points regarding algorithmic design.
\subsection{Adapting to Variances is Unnecessary}
Viewing this problem from the multi-armed bandit perspective, we see that since our arm rewards are Bernoulli, our arm pulls are bounded leading to an immediate bound on the sub-Gaussian parameter of the arm distributions.
As we will show, this coarse bound is sufficient; adapting to the variance of the arms can yield only a constant factor improvement for simplicial median identification when considering the refined definition of simplicial depth as the average number of open and closed simplices that contain the query point \citep{burr2006simplicial}.

For a data set of $n$ points in $\R^2$, \citet{boros1984number} showed that there exists a point contained within at least $n^3/27 + O(n^2)$ simplicies (triangles), i.e. a simplicial depth of $\Theta(1)$.
For higher dimensions, the same authors show that a point can be contained in no more than a $2^{-d}+ o(1)$ fraction of simplicies, translating to a simplicial depth of at most $2^{-d} + o(1)$, which is $\Theta(1)$ for fixed $d$ as $n$ grows.
Conversely, it was shown by \citet{barany1982generalization} and discussed in the context of simplicial depth by \citet{gil1992geometric} that there must exist a point contained within at least a $\frac{1}{(d+1)^{d+1}}$ fraction of simplices, which is $\Theta(1)$ for constant $d$.
Taking these together we have that $\mu_1$, the depth of the simplicial median,  is bounded away from both 0 and 1 by constants, which are functions of $d$ but independent of $n$ as $n\to \infty$.

At first glance, this may appear worrisome; if certain points have very small simplicial depth on the order of $1/n$ then simply using a Hoeffding-type inequality relying on the boundedness of the random variables in question yields slow concentration, as it ignores the smaller variance of these arm, which is $\Theta(\frac{1}{n})$ not $\Theta(1)$.
We then would like confidence intervals whose width depends on the variance of the arm in question, for example those offered by empirical variants of Bernstein's inequality as developed by \citet{maurer2009empirical}. %
Several recent works have focused on pure exploration settings with unknown variances \citep{lu2021variance}, which yield results applicable to this setting of arms with unknown means and variances but bounded rewards. 
As was shown by \citet{lu2021variance}, the optimal sample complexity in the case where arms have unknown means $\mu_i$ and variances $\sigma_i^2$ is essentially lower bounded by
\begin{equation} \label{eq:varLB}
    \tilde{\Omega}\left(\sum_{i=1}^n \left[ \frac{\sigma_i^2}{\Delta_i^2} + \frac{1}{\Delta_i} \right]\log\left(\frac{1}{\delta}\right)\right).
\end{equation}
Since $\sigma_i^2 = \mu_i (1-\mu_i)$ for these Bernoulli rewards, we can simplify \eqref{eq:varLB} in our setting to bound the cost of eliminating the $i$-th arm as
\begin{equation}
    \frac{\sigma_i^2}{\Delta_i^2} +\frac{1}{\Delta_i}
    = \frac{\mu_1-\mu_i^2}{\Delta_i^2} 
    \ge \frac{\mu_1-\mu_1^2}{\Delta_i^2}
    = \Theta(\Delta_i^{-2}),
\end{equation}
where the final equality follows from the fact that $\mu_1$ is upper and lower bounded by constants with respect to $n$.
This shows that variance adaptation can only yield a $d$ dependent constant factor improvement in this setting, and thus that adapting to the variance of the arms is unnecessary to achieve order optimal performance.
The intuition behind this is that if the gap $\Delta_i$ is small then the arm mean $\mu_i$ must be large, $\Theta(1)$, and so its variance is constant.
On the other hand if the gap is large and $\Delta_i=\Theta(1)$,then $\Delta_i^{-1}$ and $\Delta_i^{-2}$ are the same up to constants, and so this arm does not require many samples to eliminate.
For alternative objectives which necessitate differentiating between points with shallow simplicial depth, variance adaptation may yield more substantial improvements.
In these instances, techniques similar to those used by \citet{lu2021variance} can be readily employed in our algorithms for alternative objectives, as discussed in the two subsequent sections.

\subsection{Barycentric Coordinate Computation} \label{sec:barycentric}
One aspect of simplicial depth computation that we did not focus on in this work is how arm pulls are actually performed.
That is, for a given point $x_0 \in \R^d$, how do we check whether it is in the simplex created by $d+1$ other points?
Verifying if $x_0$ is contained within the convex hull (simplex) of $d+1$ points $x_1,\hdots,x_{d+1}$ is equivalent to computing the barycentric coordinates of $x_0$ with respect to $\{x_i\}_{i=1}^{d+1}$, and verifying that all are positive.
Mathematically, denoting the normalized barycentric coordinates of $x_0$ as $\lambda \in\R^{d+1}$ where $\sum_i \lambda_i = 1$, we wish to express
$x_0 = \sum_{i=1}^{d+1} \lambda_i x_i.$ 
To solve for such a $\lambda$, we define our centered data matrix $X=[x_1-x_{d+1}\  x_2-x_{d+1}\  \hdots\  x_d-x_{d+1}]$, and see that defining $\lambda_{[d]}$ as the first $d$ coordinates of $\lambda$, we have $X \lambda_{[d]} = x_0-x_{d+1}$ and $\lambda_{d+1}=1-\sum_{i=1}^d \lambda_i$.
$x_0$ is contained within the closed simplex with vertices $\{x_i\}_{i=1}^{d+1}$ if and only if $\lambda_i\ge0$ for all $i$.
Note that while solving this for one query point $x_0$ requires 1 linear system solve, practically requiring $O(d^3)$ time, solving this for several different query points simultaneously is much more efficient.
If we precompute $X^{-1}$ (or an LU decomposition of $X$), then computing the barycentric coordinates for an additional point $x_0'$ with respect to the same set of points $\{x_i\}_{i=1}^{d+1}$ will require only a matrix vector multiplication (or backsolving), requiring $O(d^2)$ time.
This means that while obtaining $k$ samples for the simplicial depth of a single point $x_0$ requires $O(kd^3)$ time, obtaining 1 sample each for $k$ different points requires only $O(d^3 + kd^2)$ time, a factor of $d$ improvement.
More importantly, solving many systems of linear equations for the same data matrix $X$ is efficient from a computational perspective, as it is amenable to being pipelined at a BLAS level.
To this end, our algorithm is optimized for batch efficiency, operating in a number of rounds scaling logarithmically with $1/\epsilon$ or $1/\ecost$, which is the minimal possible to achieve order optimal sample complexity up to logarithmic factors, as discussed by \citet{karpov2020batched}. Within each round, our uniform sampling of arms allows us to obtain this theoretical factor of $d$ improvement as well as the practical benefits of more efficient batched operations.

\section{Concluding Remarks} \label{sec:conc}

In this work we proposed a novel algorithmic framework for the adaptive computation of data depth.
This method enables us to compute the simplicial median of a data set efficiently by approximating the depth of each point to the necessary accuracy, with dramatic time savings allowing for these computations to be run on larger and higher dimensional data sets than was previously possible.
We provided instance dependent theoretical guarantees for our adaptive method, and showed its excellent empirical performance.

In addition to simplicial depth, our proposed technique of using adaptivity to efficiently approximate the points' depths to the necessary accuracy can be extended to several other common measures of depth including majority depth, Oja depth, and likelihood depth \citep{liu1999multivariate}.
Majority depth is defined with respect to half-spaces, where the sample majority depth is defined as $M_n(x)= \E\left[ \mathds{1}\{x \text{ is in a major side determined by } (X_1,\hdots,X_d)\}\right]$, where major side indicates the half-space bounded by the hyperplane containing $(X_1,\hdots,X_d)$ which has probability at least one half, where $(X_1,\hdots,X_d)$ are drawn uniformly at random  from all subsets of $\{x_i\}$ of size $d$.
This is naturally amenable to our technique of adaptive sampling, as now instead of computing whether $x$ is within a simplex of $d+1$ random points, we rather check whether it is on a given side of a hyperplane defined by $d$ points, and adaptively approximate the more promising points.
The sample Oja depth is defined as $OD_n(x) := \left(1+\E\left[\text{volume}(S[x,X_1,\hdots,X_d\right]\right)^{-1}$, for $X_1,\hdots,X_d$ drawn uniformly at random from all subsets of $\{x_i\}$ of size $d$, where $S$ is the closed simplex formed by the $d+1$ points.
Adaptive sampling can be incorporated using our techniques, as not all $n \choose d$ subsets need to be enumerated for each point in order to determine whether it is the deepest or not.
Note that we are no longer obtaining unbiased estimates of the Oja depth itself, and now essentially need to transfer the confidence intervals from $y$ to $(1+y)^{-1}$ where $y= \E\left[\text{volume}(S[x,X_1,\hdots,X_d\right]$ is the quantity we approximate.
A final common measure of depth that we highlight is likelihood depth, which can also be adaptively approximated in certain cases.
The distributional likelihood depth for a point $x$ drawn from a distribution with density $f$ is $L(x)=f(x)$, where the sample version is based on an empirical estimate of the density at $x$.
For kernel-based density estimates adaptivity can be utilized, as then $L_n(x) = \frac{1}{n}\sum_{i=1}^n k(x,x_i)$, at which point we observe that not all $n$ kernel computations need to be performed for each point.
These alternative notions of depth show the natural transferability of our algorithmic principle; if \textit{relative} ordering is the object of interest, adaptivity can yield dramatic gains.

There are several important lines of future work regarding adaptive simplicial depth computation.
One is to theoretically analyze the effects of arm correlation on the performance of Algorithm 1.
In the work of  \citet{baharav2019ultra} it was shown that correlated arm pulls could provably improve sample complexity in the fixed budget setting.
In this fixed confidence regime however, a more sophisticated scheme is required to estimate and exploit these unknown dependencies.
Another direction is to note that there exists spatial information that we are neglecting; 
if two points $x_1,x_2$ are close to each other then $D_n(x_1)$ should not be too far from $D_n(x_2)$, depending on the positions of the other points.
This can potentially improve the algorithmic dependence on $n$, necessitating a more sophisticated bandit algorithm based on bandits in metric spaces \citep{mason2019learning}.

\acks{Baharav was supported in part by the NSF GRFP and the Alcatel-Lucent Stanford Graduate Fellowship.
Lai was supported in part by the NSF under DMS-1811818.
}

\newpage

\appendix
\section{Proofs} \label{app:additionalProofs}
In this Appendix, we provide the full proofs missing from the main text.

\subsection{Proof of \Cref{thm:main}}
We analyze \Cref{alg:adaDepth} following the steps of \citet{hillel2013distributed}, with the necessary modifications for the BMO method similar to those made by \citet{bagaria2021bandit}.
We begin by showing that each depth estimator $\hat{\mu}_i^{r}$ is within $\epsilon_r/2$ of the true simplicial depth of point $x_i$, $\mu_i$, for all $i,r$ with probability at least $1-\delta$.
We do this by assuming that samples are drawn for all arms in each round, even if they are not observed, and bound the probability that any of these empirical estimates deviate from their mean.

\begin{lem} \label{lem:conf}
With probability at least $1-\delta$ we have that $|\hat{\mu}_i^{r} - \mu_i| < \epsilon_r/2$
for all $i \in [n]$ and all $r \in \N$ simultaneously, where
\begin{equation*}
    \hat{\mu}_i^{r} := \frac{1}{t_r} \sum_{j=1}^{t_r} \mathds{1}\{x_i\in S[x_{i_1}^{(j)},\hdots,x_{i_{d+1}}^{(j)}]\}
\end{equation*}
and $\{x_{i_1}^{(j)},\hdots,x_{i_{d+1}}^{(j)} \}$ is drawn uniformly at random without replacement from $\CD$, independently for each $j\in [t_r]$.
\end{lem}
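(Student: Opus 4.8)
The plan is to establish a high-probability confidence bound that holds uniformly across all arms $i \in [n]$ and all rounds $r \in \N$, via Hoeffding's inequality combined with a union bound. First I would fix a single arm $i$ and a single round $r$. Since each indicator $\mathds{1}\{x_i \in S[x_{i_1}^{(j)},\hdots,x_{i_{d+1}}^{(j)}]\}$ is a Bernoulli random variable bounded in $[0,1]$ with expectation exactly $\mu_i = D_n(x_i)$ (this is the key observation that the expectation of a random simplex indicator equals the simplicial depth, as noted in the problem formulation), the empirical average $\hat{\mu}_i^{r}$ over $t_r$ independent draws is an unbiased estimator of $\mu_i$. Applying Hoeffding's inequality to these bounded i.i.d. terms gives
\begin{equation*}
    \P\left(|\hat{\mu}_i^{r} - \mu_i| \ge \epsilon_r/2\right) \le 2\exp\left(-2 t_r (\epsilon_r/2)^2\right) = 2\exp\left(-t_r \epsilon_r^2/2\right).
\end{equation*}

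The main step is then to verify that the chosen schedule $t_r = \lceil 2\epsilon_r^{-2}\log(4nr^2/\delta)\rceil$ drives this per-$(i,r)$ failure probability below $\delta/(2nr^2)$. Substituting $t_r \ge 2\epsilon_r^{-2}\log(4nr^2/\delta)$ into the exponent yields $t_r \epsilon_r^2 / 2 \ge \log(4nr^2/\delta)$, so that
\begin{equation*}
    2\exp\left(-t_r\epsilon_r^2/2\right) \le 2\exp\left(-\log(4nr^2/\delta)\right) = \frac{2\delta}{4nr^2} = \frac{\delta}{2nr^2}.
\end{equation*}
Next I would union bound over all $n$ arms and all rounds $r \in \N$. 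The total failure probability is at most
\begin{equation*}
    \sum_{r=1}^{\infty} \sum_{i=1}^{n} \frac{\delta}{2nr^2} = \frac{\delta}{2}\sum_{r=1}^{\infty} \frac{1}{r^2} = \frac{\delta}{2}\cdot\frac{\pi^2}{6} \le \delta,
\end{equation*}
where the summability of $\sum_r r^{-2} = \pi^2/6 < 2$ is precisely why the $r^2$ factor was inserted into the logarithm in the definition of $t_r$. This establishes the claim on the complement event, which has probability at least $1-\delta$.

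One subtlety I would address carefully is the phrase ``assuming that samples are drawn for all arms in each round, even if they are not observed.'' Because the algorithm eliminates arms over time and only actually computes $Y_{i,j}$ for surviving arms, the estimators $\hat{\mu}_i^r$ are only realized for $i \in S_r$. To make the union bound clean, I would define a coupled idealized process in which every arm $i \in [n]$ receives its full complement of $t_r$ samples in every round $r$, drawn from the same randomness; the confidence bound is proved for this idealized process, and since the actually-observed estimators coincide with the idealized ones on the surviving arms, the uniform guarantee transfers. The only genuine obstacle worth flagging is ensuring the samples are i.i.d.\ as required by Hoeffding: here the subsets are drawn uniformly at random without replacement within each simplex (the $d+1$ vertices are distinct), but the draws are independent \emph{across} $j$, so each summand is an independent bounded random variable with the correct mean, and Hoeffding applies directly. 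This lemma then feeds into the overall correctness and sample-complexity argument by guaranteeing that, on the good event, the true simplicial median is never erroneously eliminated and every suboptimal arm is removed once $\epsilon_r$ falls below its gap $\Delta_i$.
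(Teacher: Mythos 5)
Your proof is correct and follows essentially the same route as the paper's: Hoeffding's inequality for $[0,1]$-bounded i.i.d.\ samples applied to each pair $(i,r)$, the choice of $t_r$ driving each failure probability below $\delta/(2nr^2)$, and a union bound over arms and rounds using $\sum_r r^{-2} < 2$. Your additional care about the coupled idealized process matches the paper's remark that samples are assumed drawn for all arms in each round even if unobserved, so nothing is missing.
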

\begin{proof}[Proof of \Cref{lem:conf}] \belowdisplayskip=-12pt
By Hoeffding's inequality for $[0,1]$ bounded random variables \citep{wainwright2019high}, we have that
\begin{align*}
    \P \left( \bigcup_{i,r} \{|\hat{\mu}_i^{r} - \mu_i| \ge \epsilon_r/2\} \right) 
    &\le \sum_{i,r}  \P \left( |\hat{\mu}_i^{r} - \mu_i| \ge \epsilon_r/2 \right) \\
    &\le \sum_{i,r} 2\exp\left(-2t_r (\epsilon_r/2)^2\right)\\ 
    &\le \sum_{r=1}^\infty \frac{\delta}{2r^2}\\
    &\le \delta %
\end{align*}
\end{proof}

\noindent
With this lemma in place, we are now able to analyze \Cref{alg:adaDepth}.

\begin{proof}
Conditioning on the event in \Cref{lem:conf} where our confidence intervals hold, we have that the simplicial median (best arm) is not eliminated during the course of the algorithm.
Further, we see that any suboptimal arm $i$ must be eliminated by round $r_i = \lceil \log_2 (2/\Delta_i)\rceil$, since then $\Delta_i \ge 2\epsilon_r$, and so $\hat{\mu}_{*}^r-\epsilon_r > \mu_1 - 3\epsilon_r/2 \ge \mu_i + \epsilon_r/2 > \hat{\mu}_i^r$.
Thus, the algorithm must terminate after at most $\lceil \log_2 (2/\Delta_2)\rceil$ rounds, at which point only the simplicial median will remain.
If multiple points attain the maximum simplicial depth, then the algorithm will still terminate within at most $\lceil \log_2 (2\sqrt{\ecost})\rceil$ rounds due to the doubling accuracy.

Analyzing the algorithm's requisite sample complexity, we see that since arm $i$ must be eliminated prior to round $r_i=\lceil \log_2 (2/\Delta_i) \rceil$, it can be pulled no more than
\begin{align*}
    t_{r_i}&= \lceil2 \epsilon_{r_i}^{-2} \log (4nr_i^2/\delta) \rceil\\
    &\le 32\Delta_i^{-2}\log (16n\log_2^2 (2/\Delta_i)/\delta)  + 1 \numberthis
\end{align*}
times. Here, we used that $r_i \ge 1$ and so $r_i\le 2\log_2 (2/\Delta_i)$.
Additionally, due to our ability to exactly compute the simplicial depth of a point, our algorithm will not pull any single arm too many times.
Concretely, since the mean of arm $i$ is exactly computed if $t_{r_i}\ge \ecost$, at most $\ecost$ computation is done before $x_i$'s depth is exactly computed, i.e. at most $2\times\ecost$ total work.

Since the algorithm must terminate if there is only one remaining arm, we have that the best arm will be pulled the same number of times as the second most pulled arm, and so defining $r_1:= r_2$ and $\Delta_1:=\Delta_2$ we have that on the event where the confidence intervals hold, the total sample complexity will be at most
\begin{align*}
   \sum_{i=1}^n t_{r_i} 
   &\le n + \sum_{i=1}^n \min\left(\frac{32 \log \left(\frac{16n}{\delta}\log_2^2 \left(\frac{2}{\Delta_i}\right)\right)}{\Delta_i^2}, 2\times \ecost\right). \numberthis \quad %
\end{align*}
\end{proof}

\vspace{-1cm}
\subsection{Proof of \Cref{cor:powerLaw}} \label{app:powerLaw}
In this subsection we show that when the gaps in the data follow a power law distribution, the sample complexity of \Cref{alg:adaDepth} is significantly better than the naive $O(n^d)$ required for simplicial median identification.
This average case analysis where the gaps follow a power law distribution follows similarly to that of Corollary 1 in \citep{bagaria2021bandit}.

\begin{proof}[Proof of \Cref{cor:powerLaw}]
We have by Theorem \ref{thm:main} that with probability at least $1-\delta$ Algorithm \ref{alg:adaDepth} will successfully return the simplicial median.
Since the event where our confidence intervals hold is independent of the random gaps that are drawn, we can integrate out the expected sample complexity with respect to the random gaps satisfying $F(\Delta)=\Delta^\alpha$.
Thus, on this success event our number of samples $M$ satisfies
\begin{align*}
    \E\{M\} 
    &= O\left( \E\left\{\sum_{i=1}^n \min\left( \frac{\log\left(\frac{n}{\delta}\log(\frac{1}{\Delta_i})\right)}{\Delta_i^2}, \ecost\right) \right\}\right)\\
    &=O\left(n\int_{\Delta=0}^1 \min\left( \frac{\log\left(\frac{n}{\delta}\log(\frac{1}{\Delta})\right)}{\Delta^2}, \ecost \right)f(d\Delta) \right)\\
    &= O\left(  n \left(\ecost^{1-\alpha/2}+ \log\left(\frac{nd \log n}{\delta}\right) \int_{\Delta=\ecost^{-1/2}}^1  \hspace{-1.2cm}\alpha \Delta^{\alpha-3} d\Delta \right)\right)\\
    &= \begin{cases} 
      O\left(n\log \left(nd/\delta\right) \ecost^{1-\alpha/2}\right), & \textnormal{for }\alpha \in [0,2), \\
      O\left(n \log \left(nd/\delta\right) \log (\ecost) \right), & \textnormal{for } \alpha = 2, \\
      O\left(n \log \left(nd/\delta\right) \right), & \textnormal{for } \alpha > 2. 
   \end{cases}
   \numberthis\label{eq:PropEq}
\end{align*}
noting that $\log(1/\Delta) \le d\log n$ for $\Delta \ge \ecost^{-1/2}$.
\end{proof}

\subsection{Proof of \Cref{thm:simplicialTopk}} \label{app:topk}
We analyze \Cref{alg:adaDepthAlpha} similarly to \Cref{alg:adaDepth}, showing that the good event $\xi$ where our mean estimators stay within their confidence intervals occurs with high probability. Then, on this good event, our algorithm will correctly identify the $k$-deepest elements, up to an allowable additive $\epsilon$.
\begin{proof}[Proof of \Cref{thm:simplicialTopk}]
As in the proof of \Cref{thm:main} we assume that all arms are sampled in all rounds until termination, even if they are not observed, and bound the probability that any of these estimators $\{\hat{\mu}_i^r\}_{i,r}$ stray outside of their confidence intervals via \Cref{lem:conf}. 
Denote this good event in Lemma 1 as $\xi$.

Prior to the termination condition in \cref{line:topKIf}, we see that on the event $\xi$ we have that 1) no point $i$ with $\mu_i<\mu_k$ will be added to the acceptance set $A$, 
2) no point $i$ with $\mu_i>\mu_k$ will be eliminated without being added to $A$, 
3) all points $i$ with $\Delta_i \ge 2\epsilon_r$ will be removed from the active set by the end of round $r$.
These can be seen by induction on $r$, on the good event $\xi$.
With these observations in hand, we see that when the algorithm exits the sampling loop, all arms with $\Delta_i \ge \epsilon$ will have been removed from the active set.
Examining these points by cases, we see that for those points with $\mu_i > \mu_k + \epsilon$, they will necessarily be contained within the output set $A$.
Conversely, those points with $\mu_i < \mu_k - \epsilon$ will necessarily not be contained within the output set $A$.
Thus, we have that on this event $\xi$ the set $A$ satisfies
\begin{equation}
    \{i : \mu_i > \mu_{k}+\epsilon\} \subseteq A \subseteq \{i : \mu_i \ge \mu_{k} - \epsilon\}.
\end{equation}

Bounding the number of pulls required, we see that a point $i$ must be eliminated by round $r_i=\lceil \log_2(2/\Delta_i^{(k)})\rceil$.
Either it will have been eliminated by round $r_i$ where $\Delta_i^{(k)} \ge 2\epsilon_{r_i}$, or we will have $\epsilon_r \le \epsilon/2$ in which case the algorithm will terminate.
Thus, our effective $\Delta_i^{(k)}$ satisfies $\Delta_i^{(k)} = \max(\mu_i - \mu_{k+1},\epsilon)$ if $i\le k$ and $\Delta_i^{(k)} = \max(\mu_k - \mu_{i},\epsilon)$ if $i> k$, as estimating the mean of point $i$ to accuracy $\epsilon/2$ is also sufficient.
Following the argument in \Cref{thm:main}, this implies that the number of pulls required to eliminate arm $i$ satisfies
\begin{align*}
    t_{r_i}&= \lceil2 \epsilon_{r_i}^{-2} \log (4nr_i^2/\delta) \rceil\\
    &\le 32\left(\Delta_i^{(k)}\right)^{-2}\log (16n\log_2^2 (2/\Delta_i^{(k)})/\delta)  + 1. \numberthis
\end{align*}
Utilizing the fact that our algorithm also terminates if the approximation cost is deemed to be greater than the exact computation cost, we have a total sample complexity on $\xi$ of
\begin{equation}
    O\left( \sum_{i=1}^n \min\left(\frac{\log \left( \frac{n}{\delta} \log \left(1/\Delta_i^{(k)}\right)\right)}{\left(\Delta_i^{(k)}\right)^2}, \ecost \right)\right),
\end{equation}
giving us the desired result.
\end{proof}

\subsection{Proof of \Cref{thm:metaAlg}}

In this section we provide the proof for our meta algorithm when applied to approximate simplicial depth computation.

We see by the careful choice of failure probabilities in line 10 that all our $\adepth$ calls will be correct with probability at least $1-\delta$.
This is because our failure probability is at most
\begin{align*}
    \P(\text{failure}) & = \P\left( \bigcup_{r \in \N} \bigcup_{i \in S_r} \{\adepth \text{ confidence intervals hold}\} \right)\\
    &\le \sum_{r \in \N} \sum_{i \in S_r} \P\left(\adepth \text{ confidence intervals hold}\right)\\
    &\le \sum_{r \in \N} \frac{\delta}{2r^2}\\
    &\le \delta.
\end{align*}

Conditioning on the good event $\xi$ where our \adepth\ confidence intervals hold, we see that in the case of approximate simplicial median identification we have that arm $i$ must be eliminated by the end of round $r_i$, where
\begin{equation}
    r_i=\min \left(\{r : r\in\N, \acost(2^{-r-1},\delta/(2nr^2)) \ge \ecost\} \cup \{\lceil \log (2/\epsilon)\rceil,\lceil \log_2 (2/\Delta_i) \rceil\}\right).
\end{equation}
We see this by contradiction; first, assuming that arm $i$ is active in round $r$ where $r> r_i$.
This implies that either $r>\lceil \log_2 (2/\Delta_i) \rceil$, which we know cannot happen on the good event $\xi$ where our confidence intervals hold, as was shown in the proof of \Cref{thm:main}.
If $r>\lceil \log_2 (2/\epsilon) \rceil$, then we must have that $\epsilon_r <\epsilon/2$, which is included as a termination condition for our meta algorithm in this setting, as if all points have depths estimated to accuracy $\epsilon/2$, then the point with the largest estimated simplicial depth must be within $\epsilon$ of the maximum.
Finally, if $r > \argmin_{r \in \N} \acost(2^{-r-1},\delta/(2nr^2)) \ge \ecost$, then the depth of point $i$ would be exactly computed.

Thus, this algorithm (when applied to approximate simplicial median computation) will spend at most $r_i$ rounds approximating the depth of point $i$, and in total spend at most 
\begin{equation}
    \left( \sum_{r=1}^{r_i-1}\acost \left(2^{-r-1},\delta/(2nr^2)\right) \right)+\min\left(\acost\left(\frac{\Delta_i}{8},\frac{\delta}{2nr_i^2}\right), \ecost\right)
\end{equation}
computation on point $i$.
Summing over the points in the data set yields the desired result.

\section{Experimental Details}\label{app:expDetails}

Here we provide details for reproducing the simulation results shown in the paper.
All our code is publicly available on GitHub for reproducibility: \url{https://github.com/TavorB/adaSimplicialDepth}.

The data for each experiment was generated as $n$ independent samples from an isotropic gaussian.
We use the original definition of simplicial depth, as the probability that a random \textit{closed} simplex contains the query point.
All experiments were run on one core of an AMD Opteron Processor 6378 with 500GB memory (no parallelism within a trial). 
All adaptive experiments are run with 20 trials per point.
As simple Hoeffding-based confidence intervals are known to be overly conservative for multi-armed bandits in practice, to obtain tighter confidence intervals we set $t_r =\lceil .2 \times 2^{-r} \log (4nr^2/\delta)\rceil$ in all our simulations.
The threshold for exact computation was determined as whether $t_r \ge n \log n$.
Trials are grouped in pairs, where the same data set is used for both trials, but a different random seed is used to facilitate a different random selection of simplicies.
Since computing ground truth results is computationally infeasible for all but the smallest of data sets, we declare the run a success and the correct result returned if the two independent runs on the same data set return the same result.
For generating timing results for the brute force method, if there were more than 10,000 simplicies that needed to be generated we instead randomly sampled 10,000 simplices and then scaled the total estimated runtime by a factor of ${n \choose {d+1}} / 10000$.
For generating timing results for Rousseeuw’s method, we ran their algorithm for finding the exact depths of 50 points within $\CD$, and scaled the estimated runtime by a factor of $n/50$.

For \Cref{fig:2dnumPullsvsGap}, 50 simulations were run on the same data set with different random seeds, and the number of pulls per point averaged.

\clearpage
\subsection{Additional Experiments}

Below in \Cref{fig:2dGapHistApp} we provide extended simulation results as in \Cref{fig:2dGapHist}.
\begin{figure}[h]
    \centering
    \begin{subfigure}[b]{0.22\textwidth}
    \includegraphics[width=\textwidth]{figures/gapPlots/n_200_gaps.pdf}
    \subcaption{$n=200$}
    \end{subfigure}
    \begin{subfigure}[b]{0.22\textwidth}
    \includegraphics[width=\textwidth]{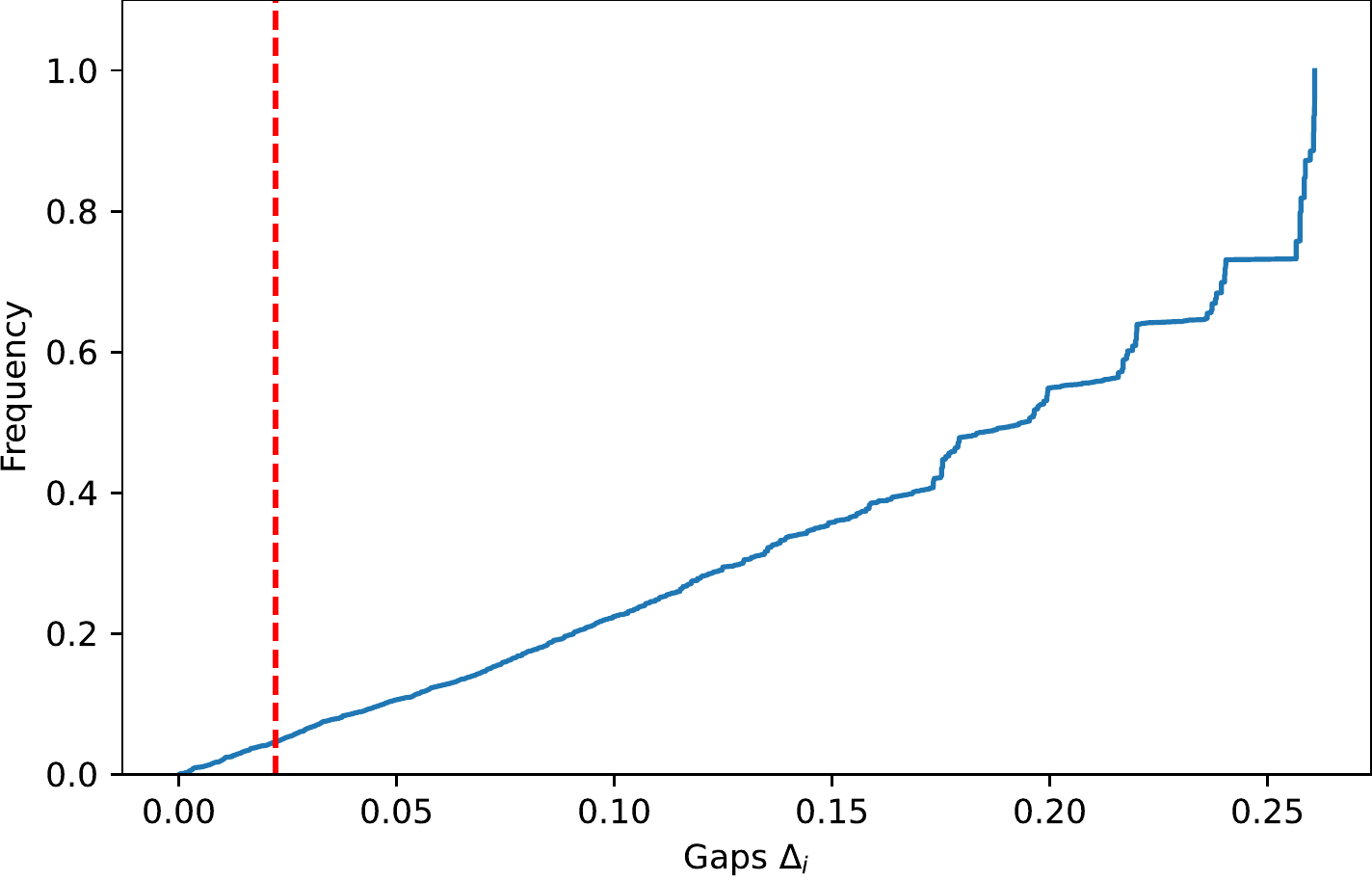}
    \subcaption{$n=255$}
    \end{subfigure}
    \begin{subfigure}[b]{0.22\textwidth}
    \includegraphics[width=\textwidth]{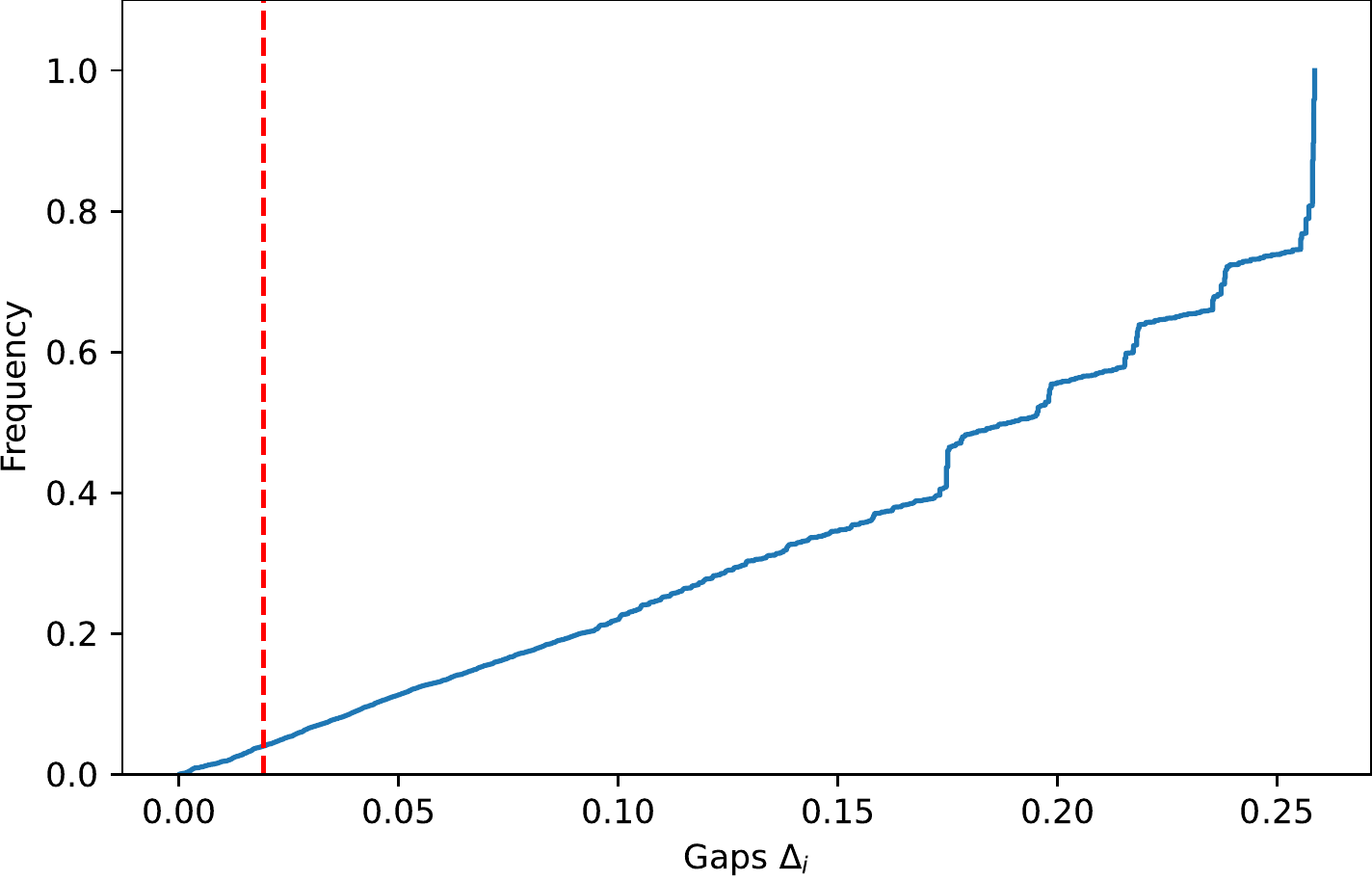}
    \subcaption{$n=325$}
    \end{subfigure}
    \begin{subfigure}[b]{0.22\textwidth}
    \includegraphics[width=\textwidth]{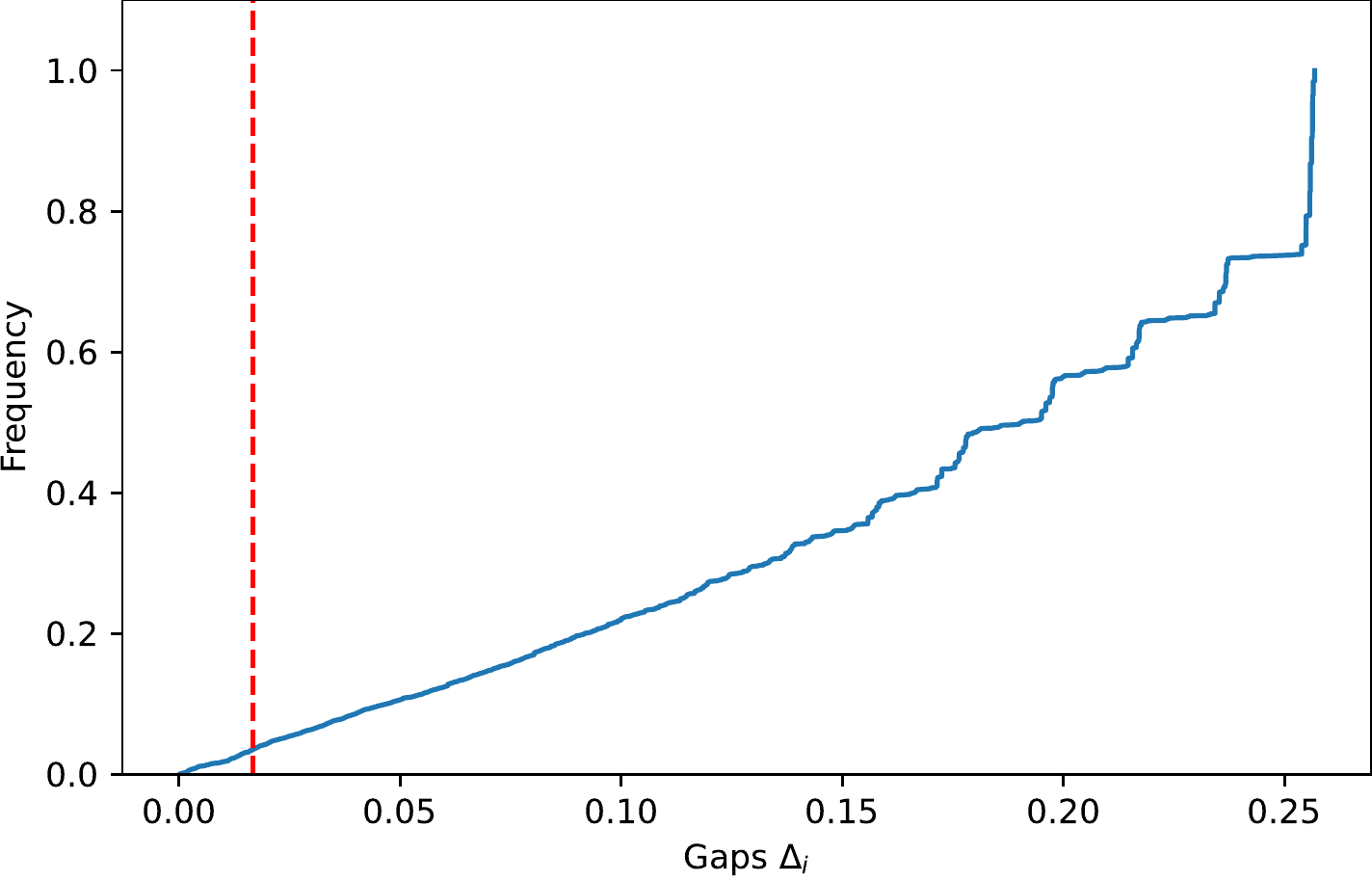}
    \subcaption{$n=414$}
    \end{subfigure}
    \begin{subfigure}[b]{0.22\textwidth}
    \includegraphics[width=\textwidth]{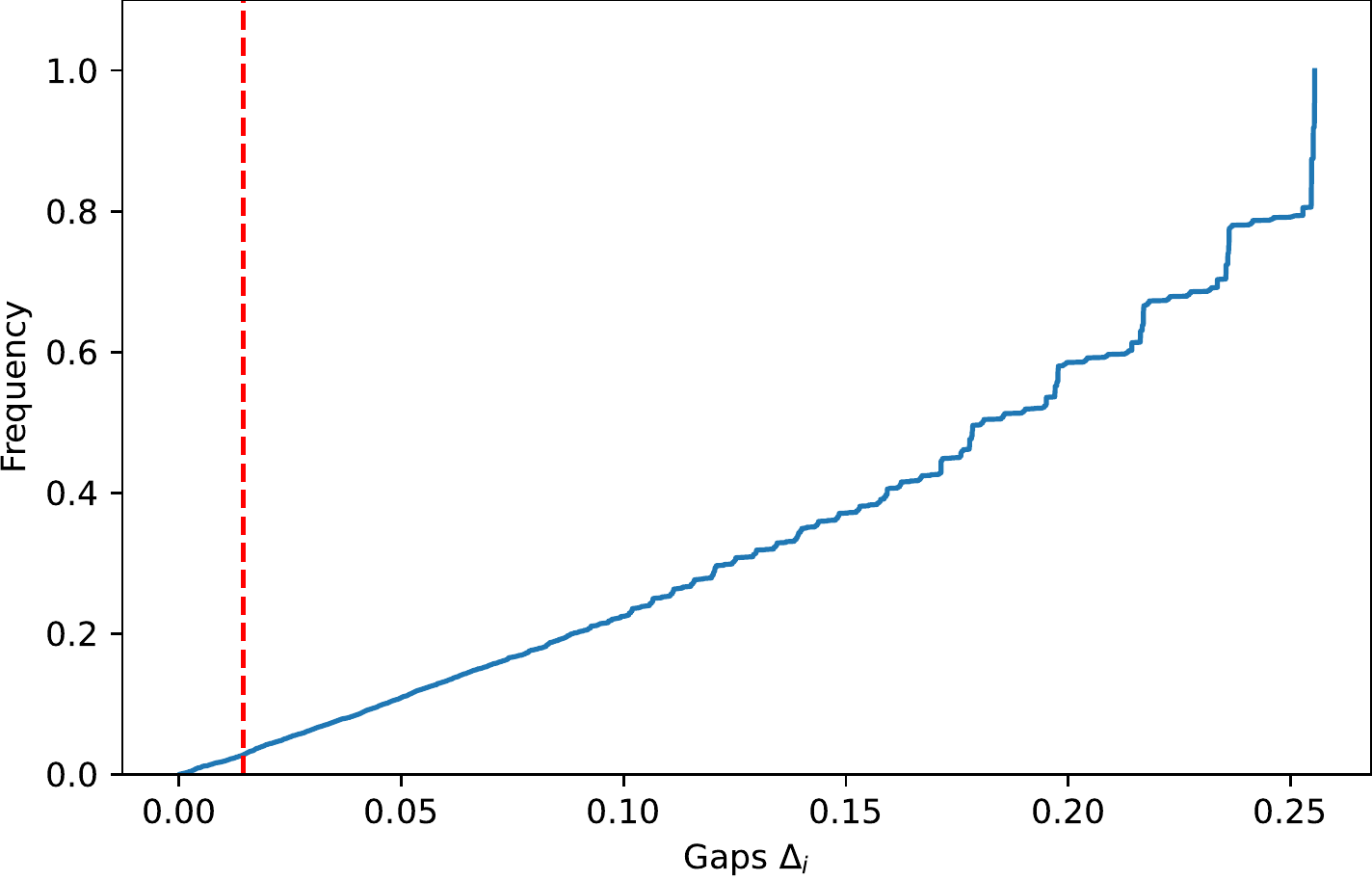}
    \subcaption{$n=527$}
    \end{subfigure}
    \begin{subfigure}[b]{0.22\textwidth}
    \includegraphics[width=\textwidth]{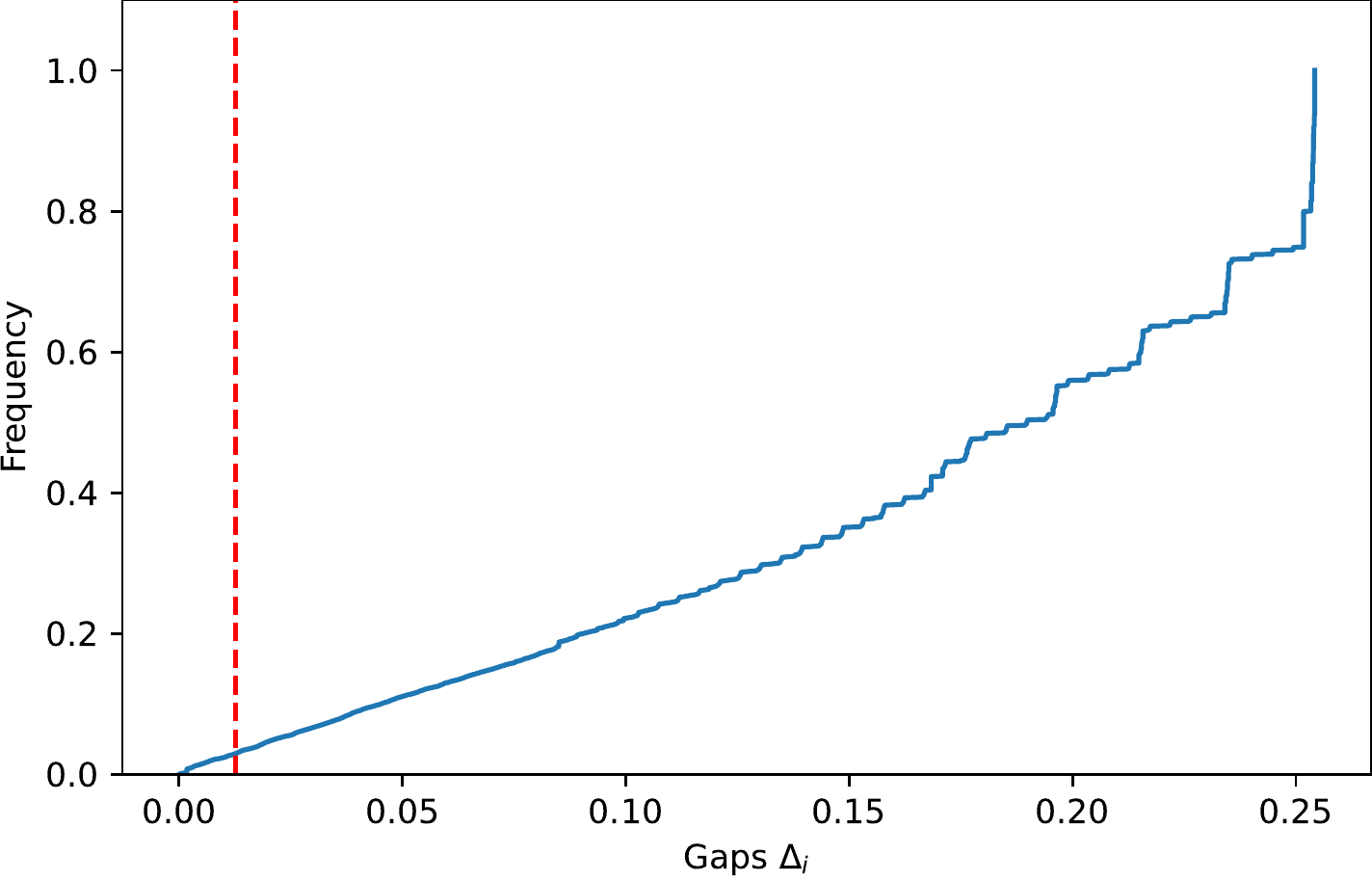}
    \subcaption{$n=672$}
    \end{subfigure}
    \begin{subfigure}[b]{0.22\textwidth}
    \includegraphics[width=\textwidth]{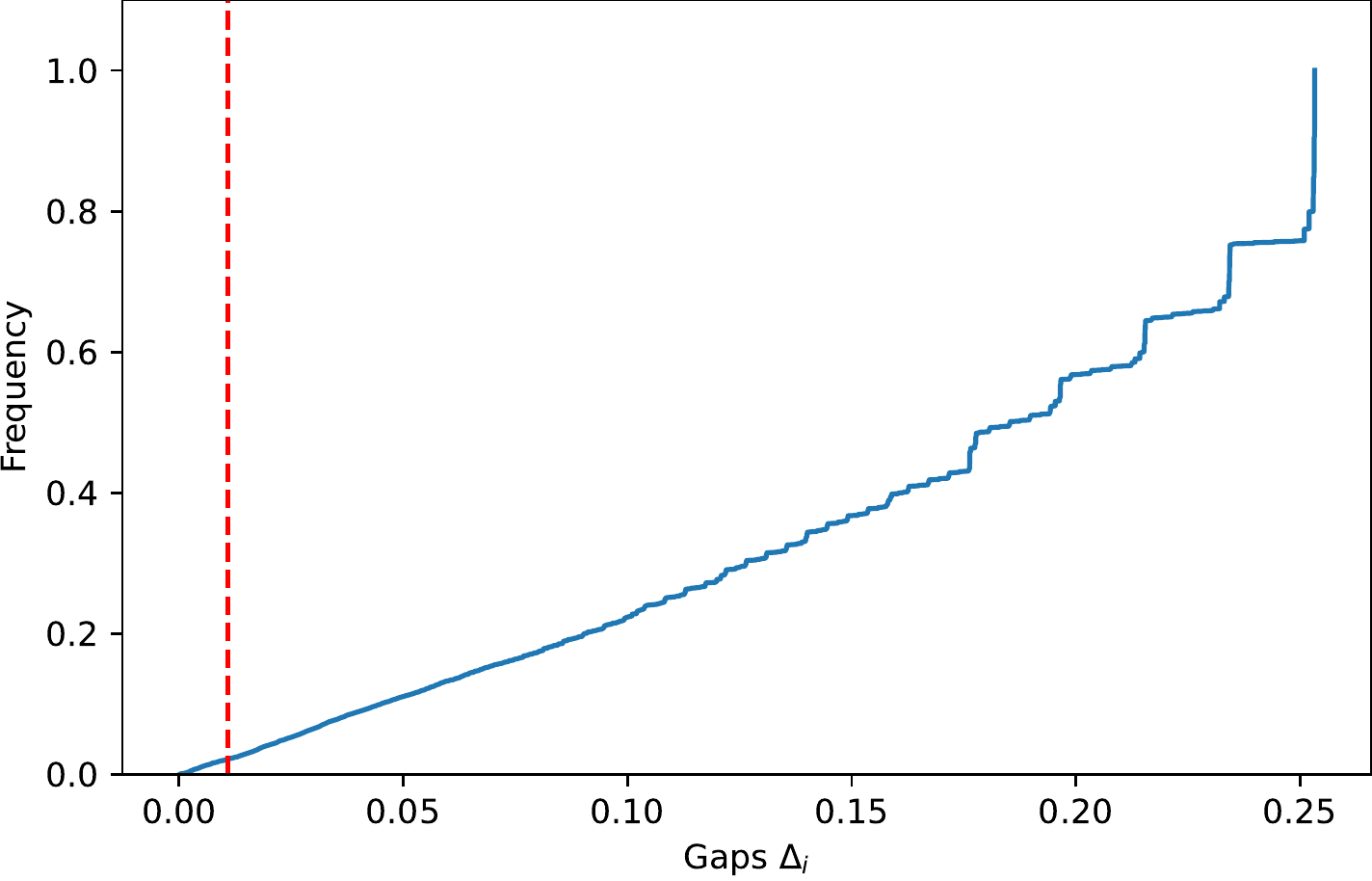}
    \subcaption{$n=856$}
    \end{subfigure}
    \begin{subfigure}[b]{0.22\textwidth}
    \includegraphics[width=\textwidth]{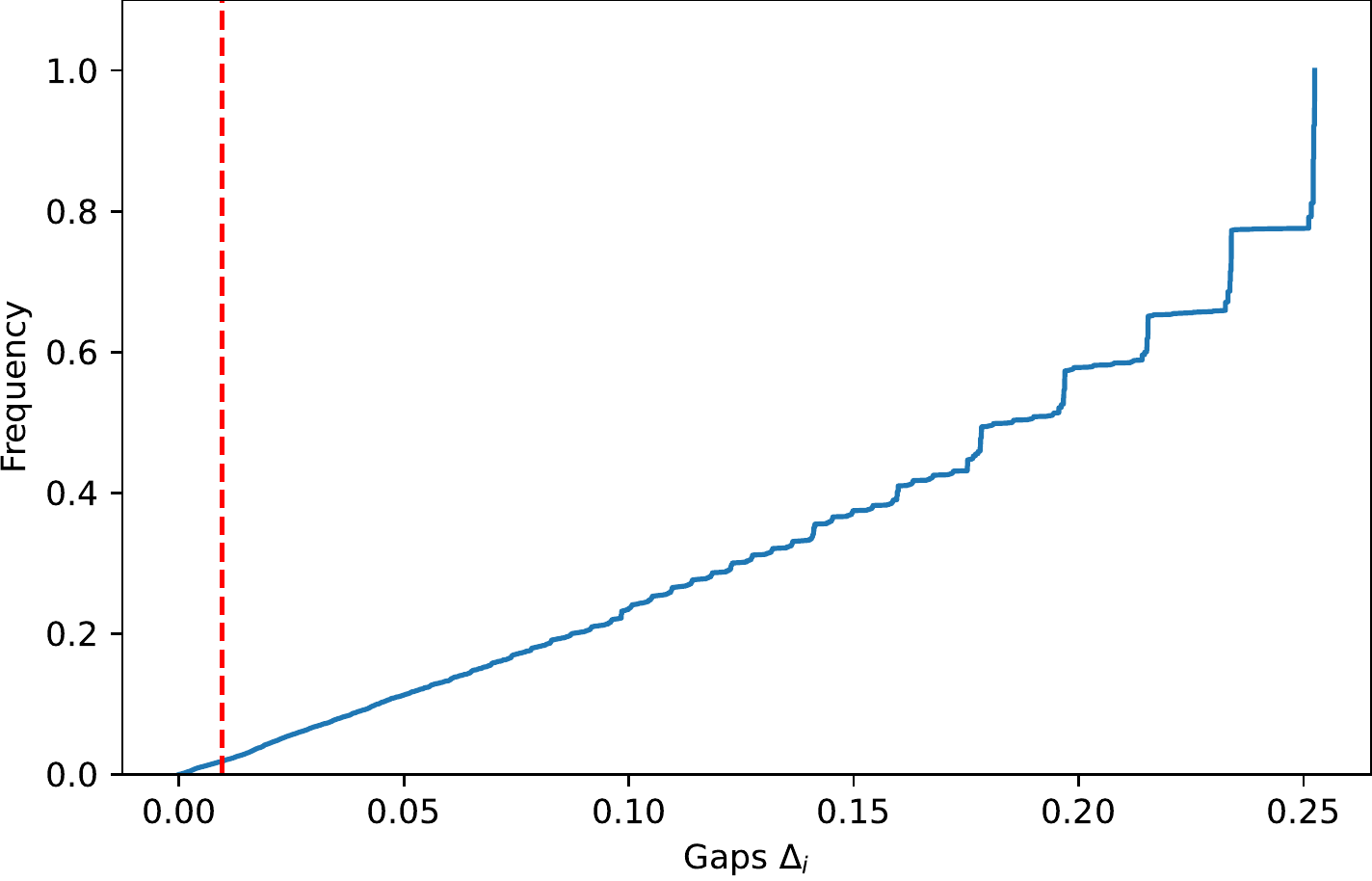}
    \subcaption{$n=1091$}
    \end{subfigure}
    \begin{subfigure}[b]{0.22\textwidth}
    \includegraphics[width=\textwidth]{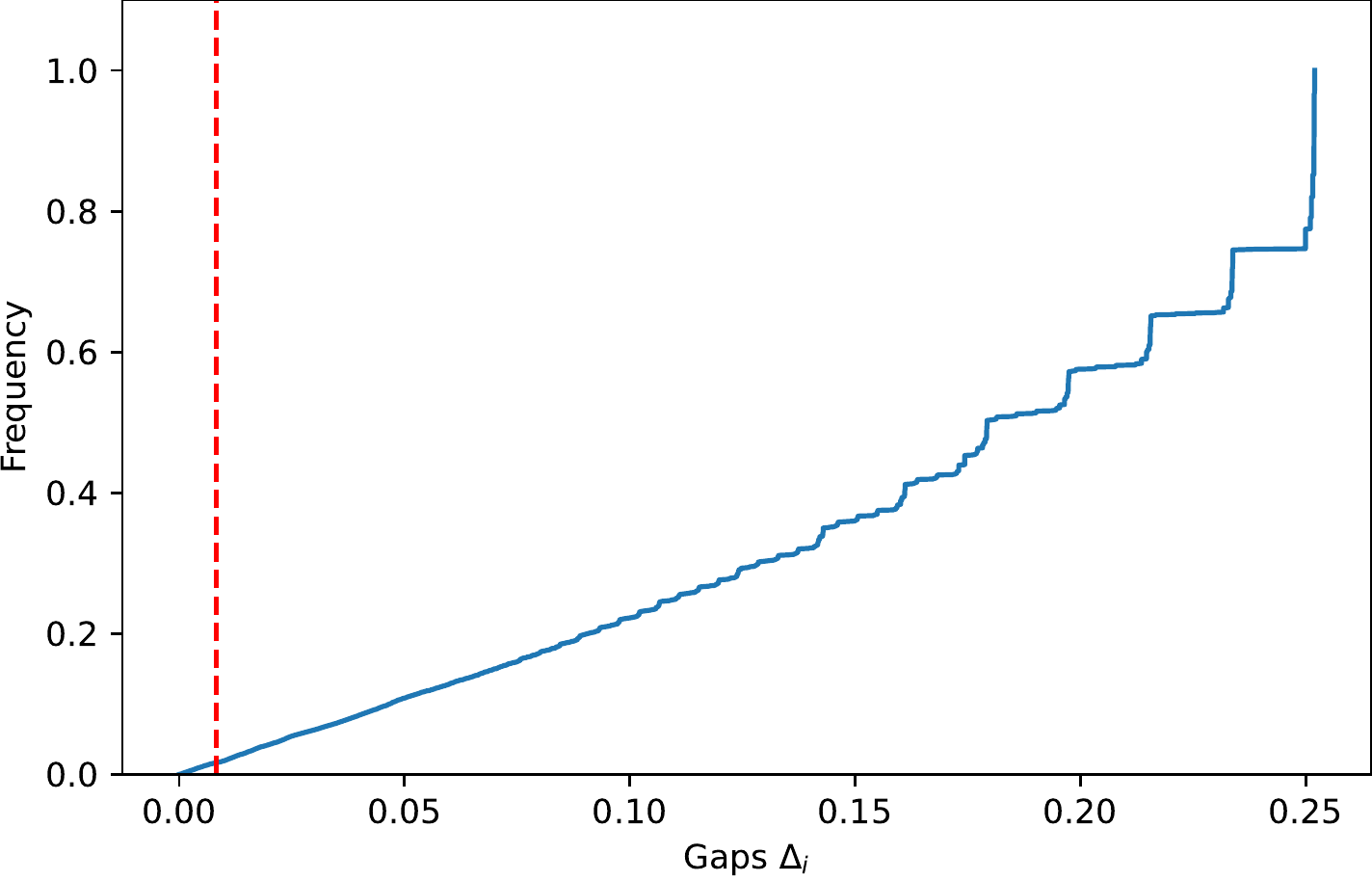}
    \subcaption{$n=1390$}
    \end{subfigure}
    \begin{subfigure}[b]{0.22\textwidth}
    \includegraphics[width=\textwidth]{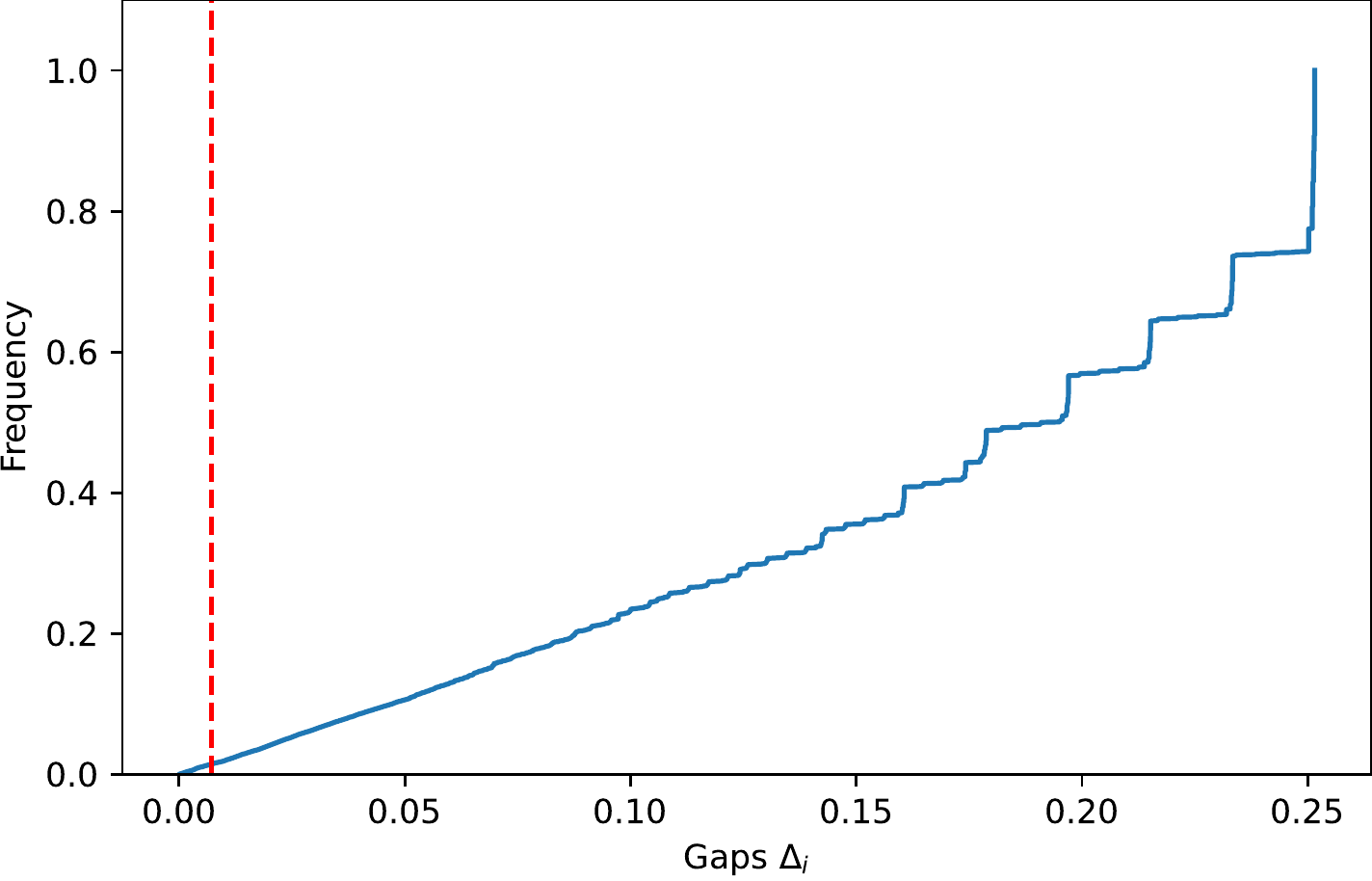}
    \subcaption{$n=1772$}
    \end{subfigure}
    \begin{subfigure}[b]{0.22\textwidth}
    \includegraphics[width=\textwidth]{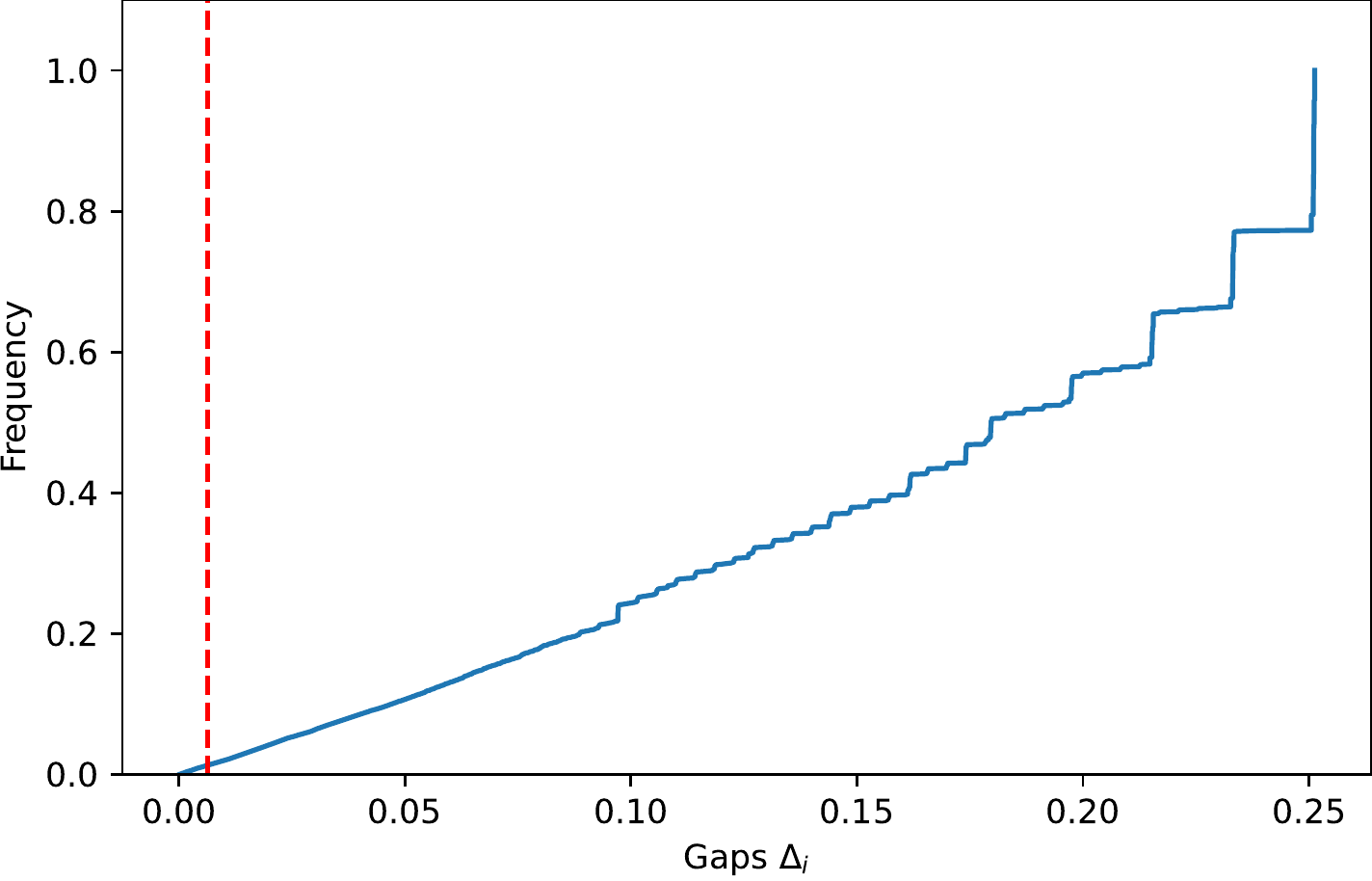}
    \subcaption{$n=2258$}
    \end{subfigure}
    \begin{subfigure}[b]{0.22\textwidth}
    \includegraphics[width=\textwidth]{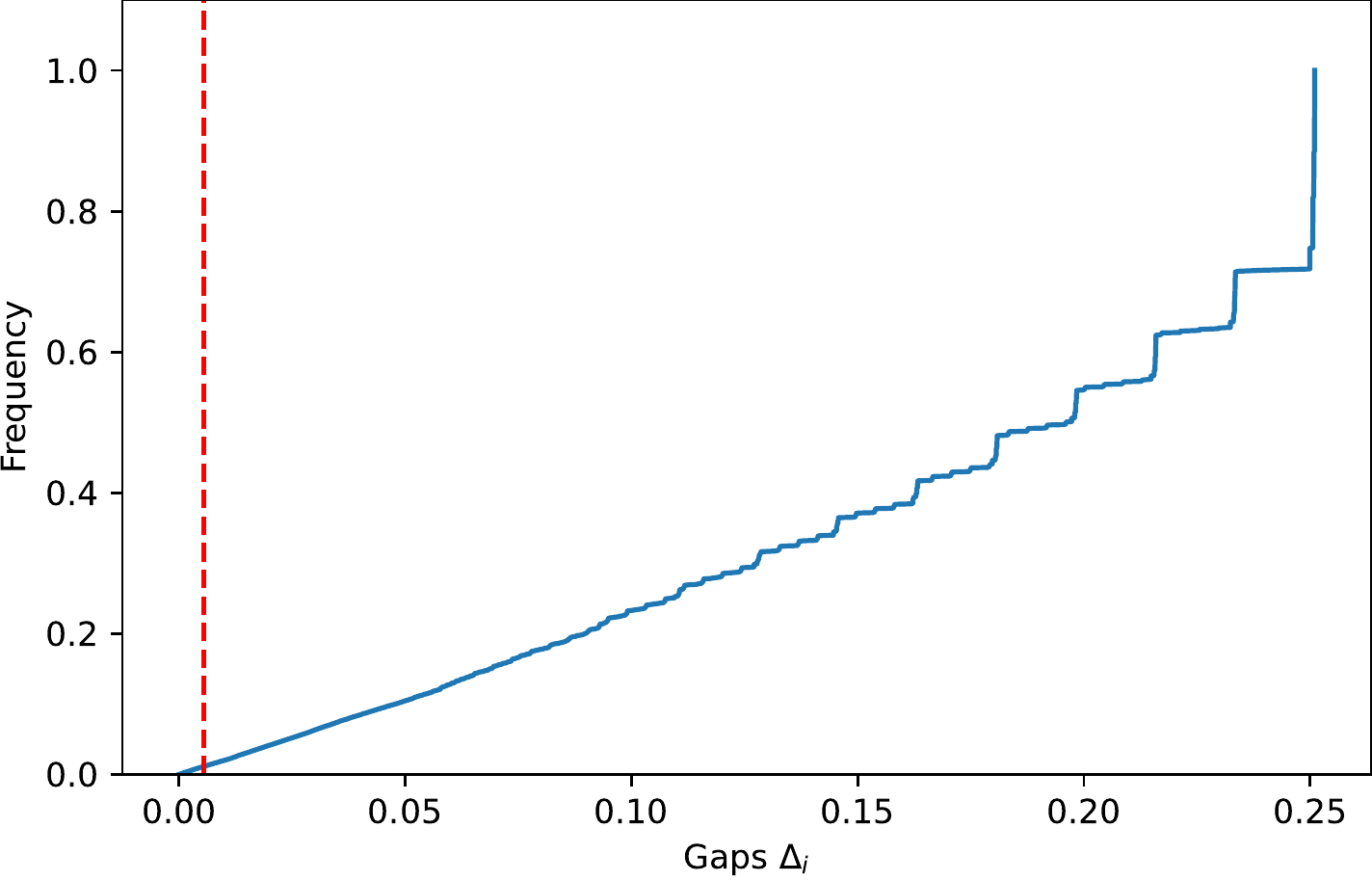}
    \subcaption{$n=2877$}
    \end{subfigure}
    \begin{subfigure}[b]{0.22\textwidth}
    \includegraphics[width=\textwidth]{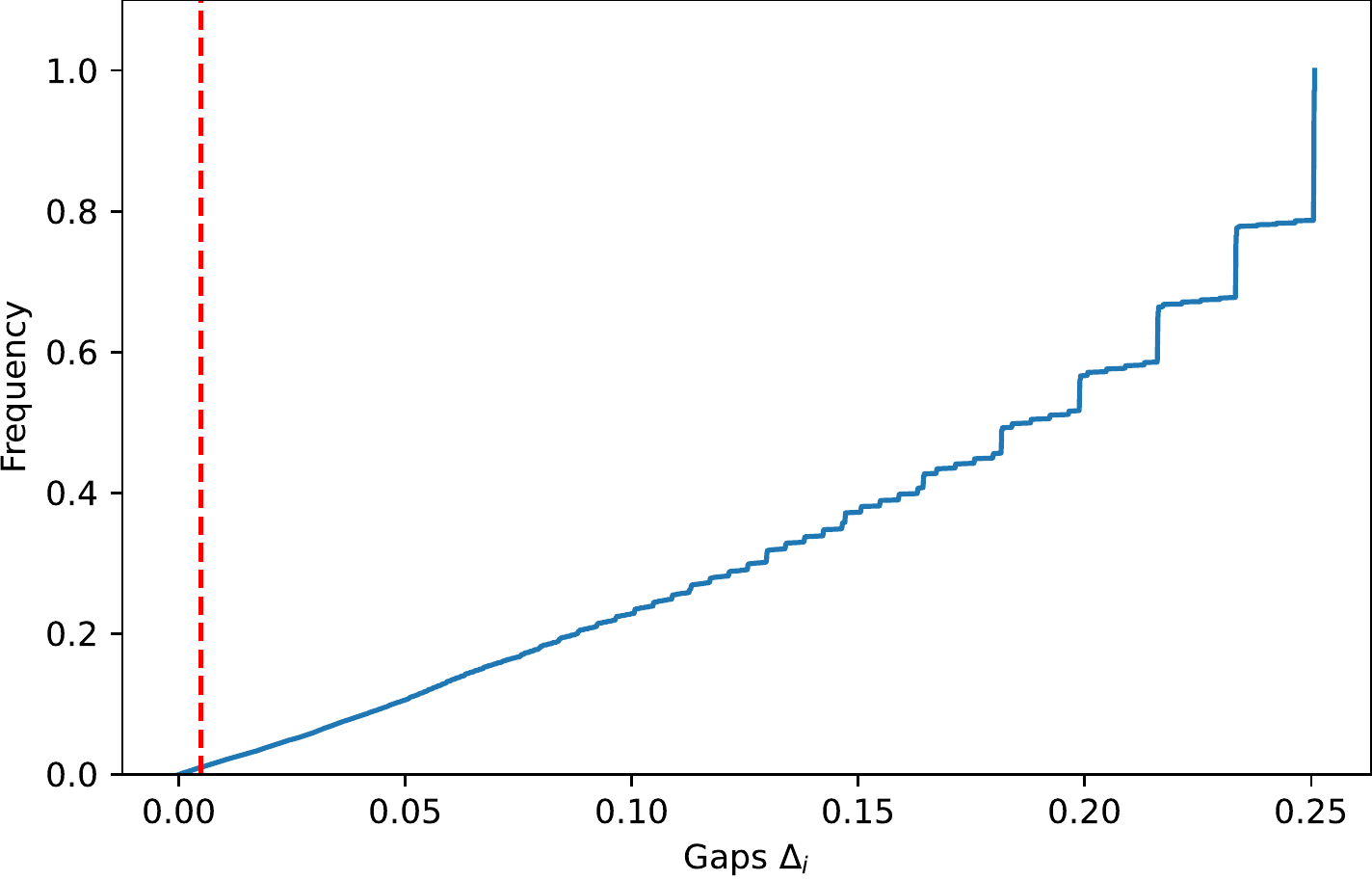}
    \subcaption{$n=3666$}
    \end{subfigure}
    \begin{subfigure}[b]{0.22\textwidth}
    \includegraphics[width=\textwidth]{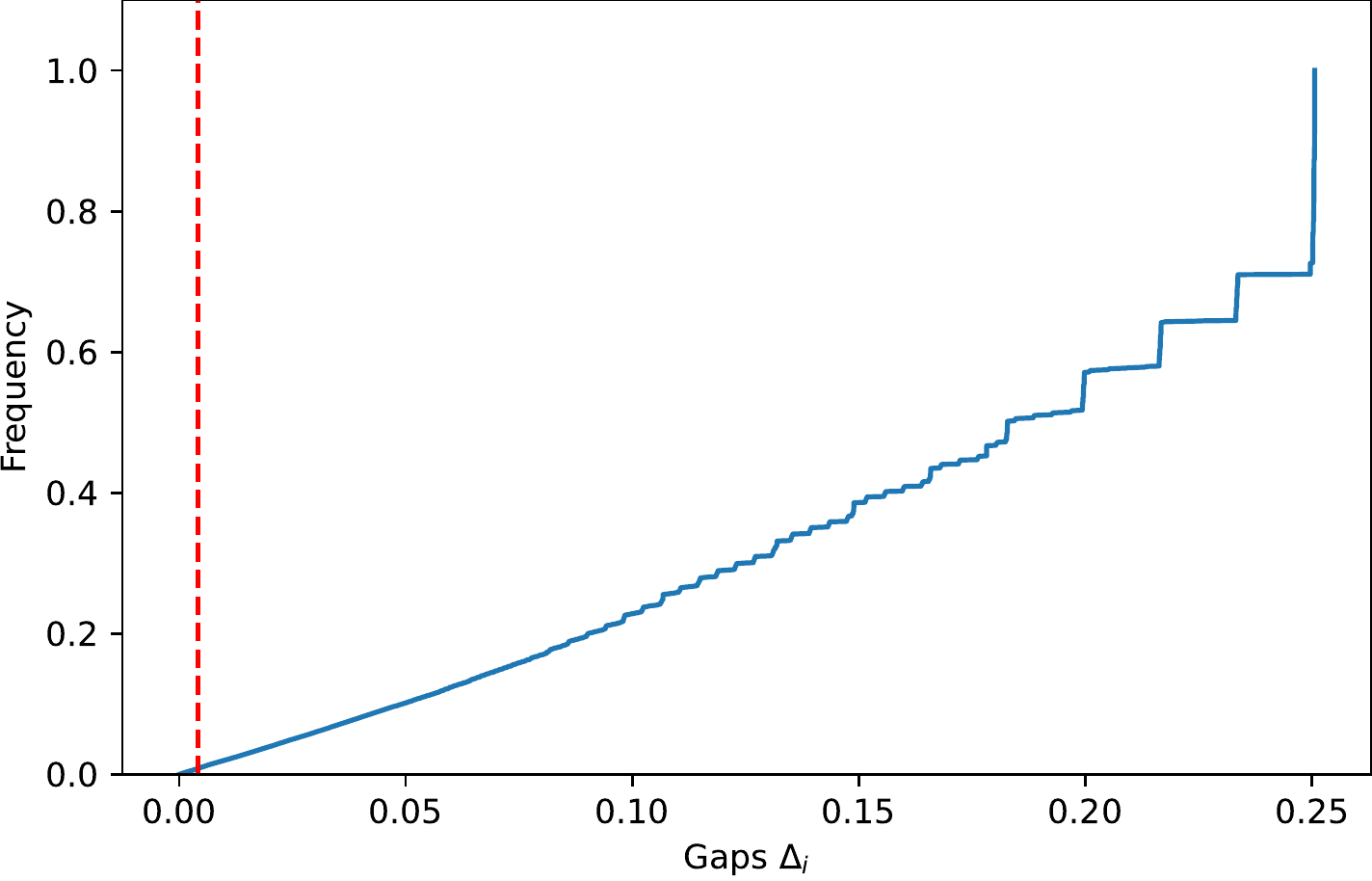}
    \subcaption{$n=4671$}
    \end{subfigure}
    \begin{subfigure}[b]{0.22\textwidth}
    \includegraphics[width=\textwidth]{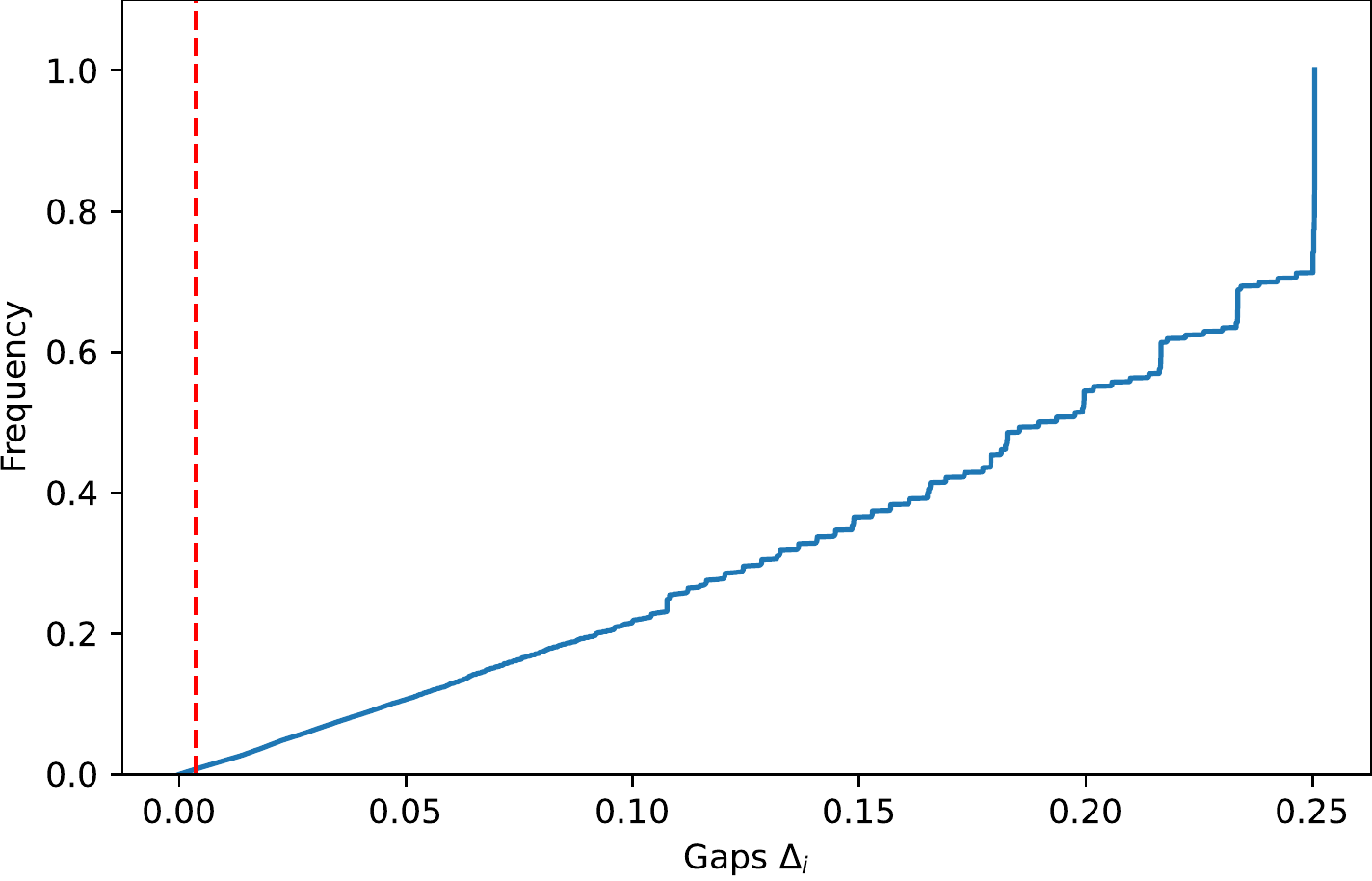}
    \subcaption{$n=5953$}
    \end{subfigure}
    \begin{subfigure}[b]{0.22\textwidth}
    \includegraphics[width=\textwidth]{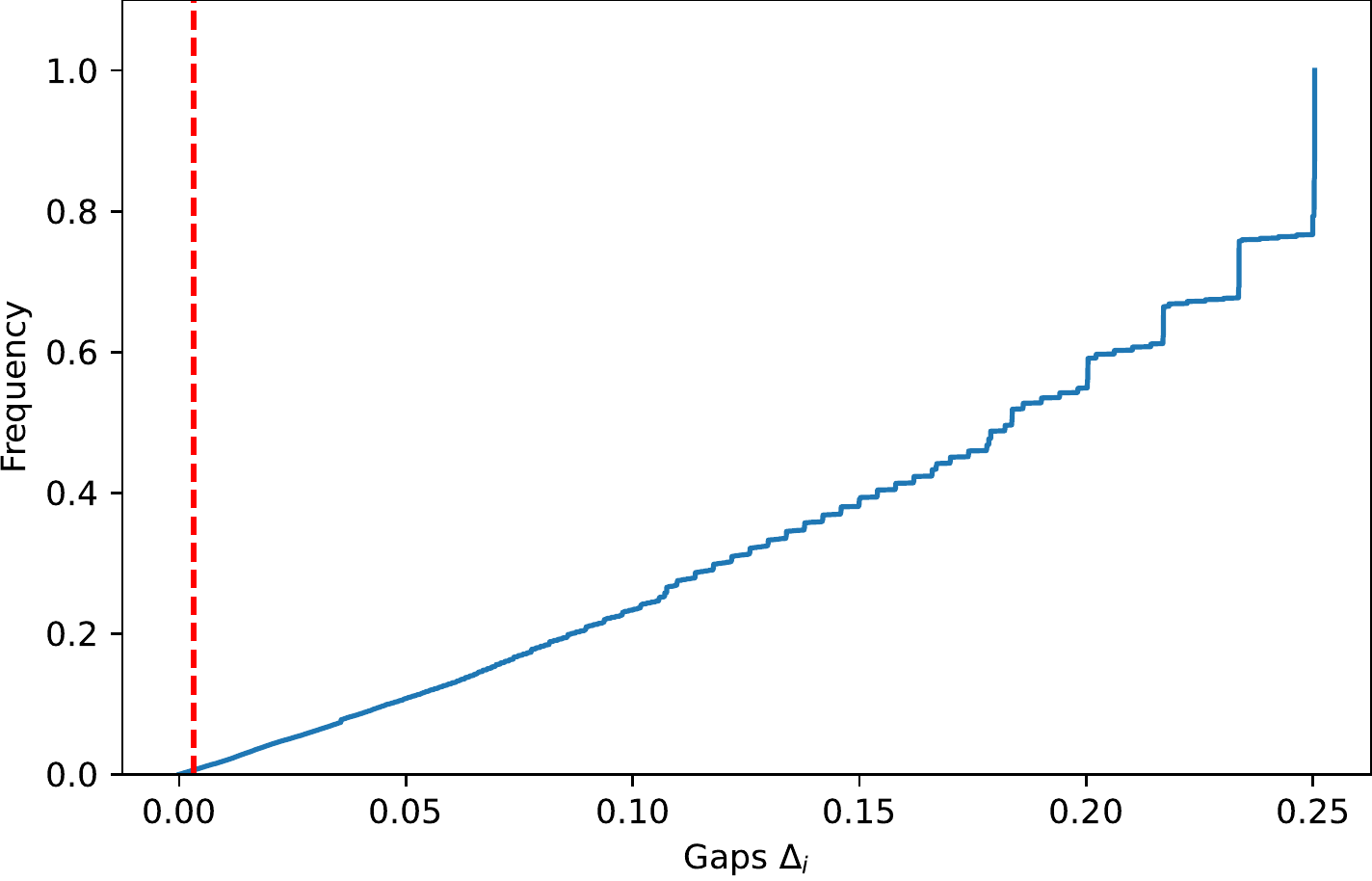}
    \subcaption{$n=7585$}
    \end{subfigure}
    \begin{subfigure}[b]{0.22\textwidth}
    \includegraphics[width=\textwidth]{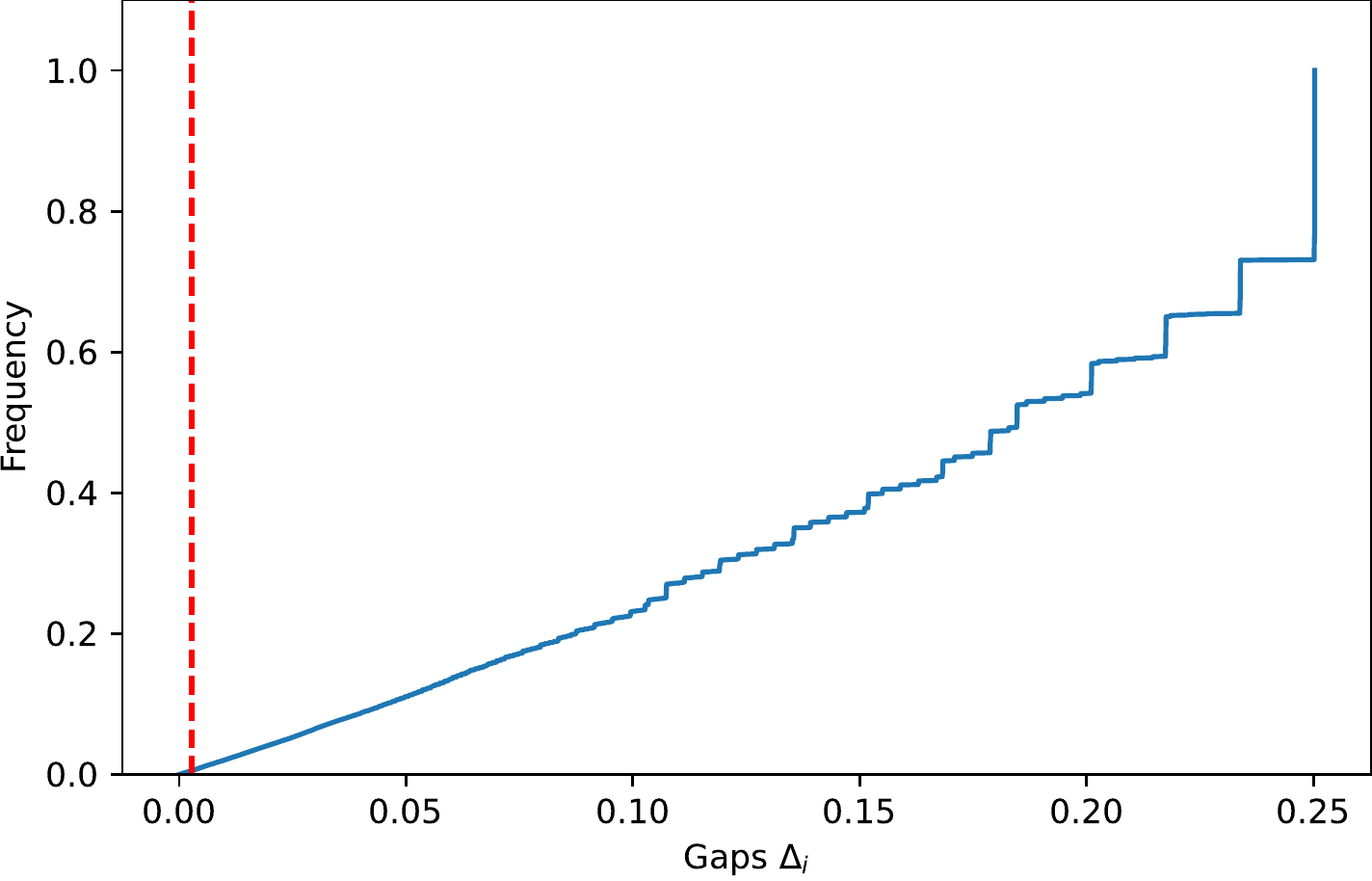}
    \subcaption{$n=9666$}
    \end{subfigure}
    \begin{subfigure}[b]{0.22\textwidth}
    \includegraphics[width=\textwidth]{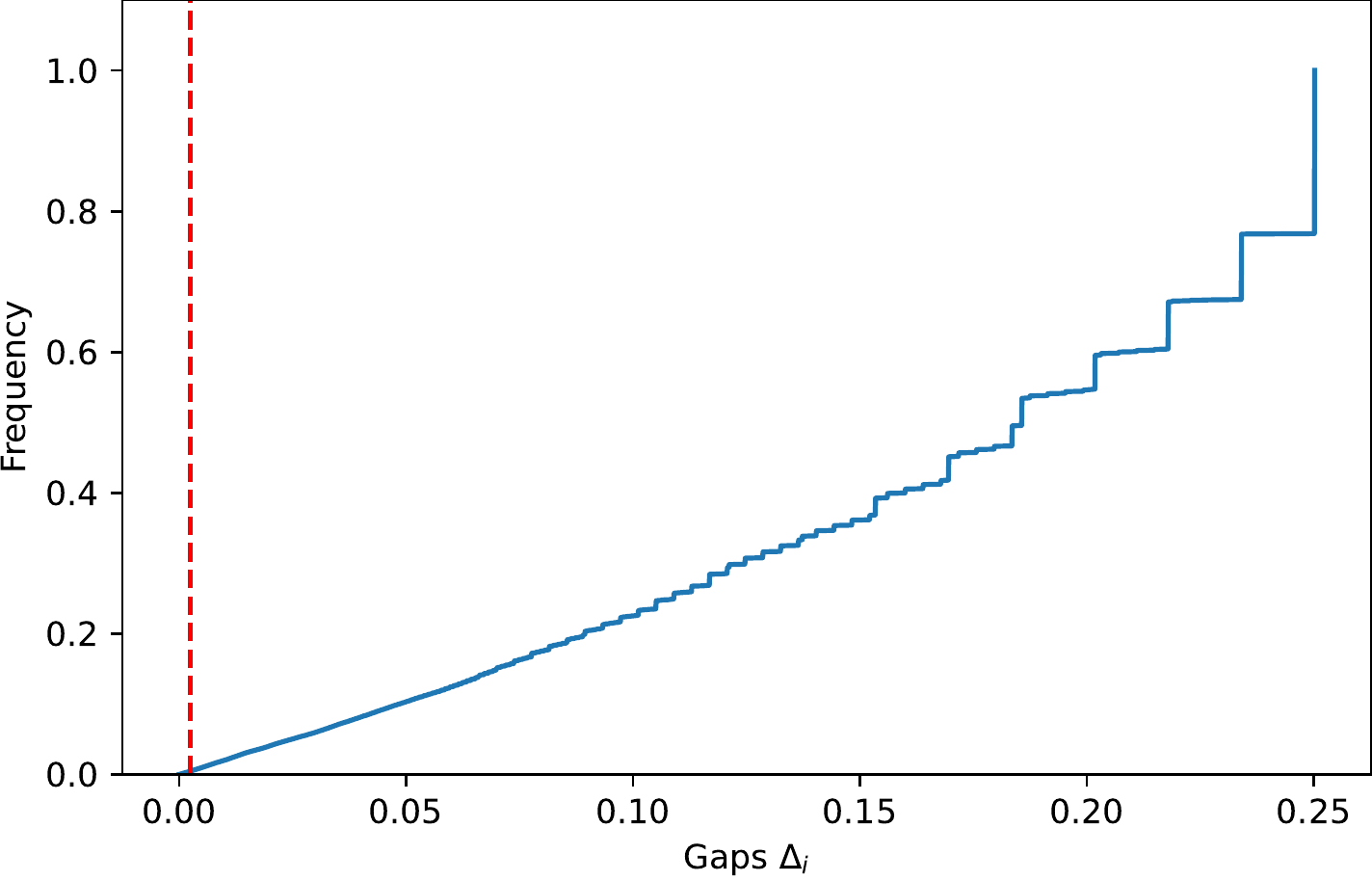}
    \subcaption{$n=12317$}
    \end{subfigure}
    \begin{subfigure}[b]{0.22\textwidth}
    \includegraphics[width=\textwidth]{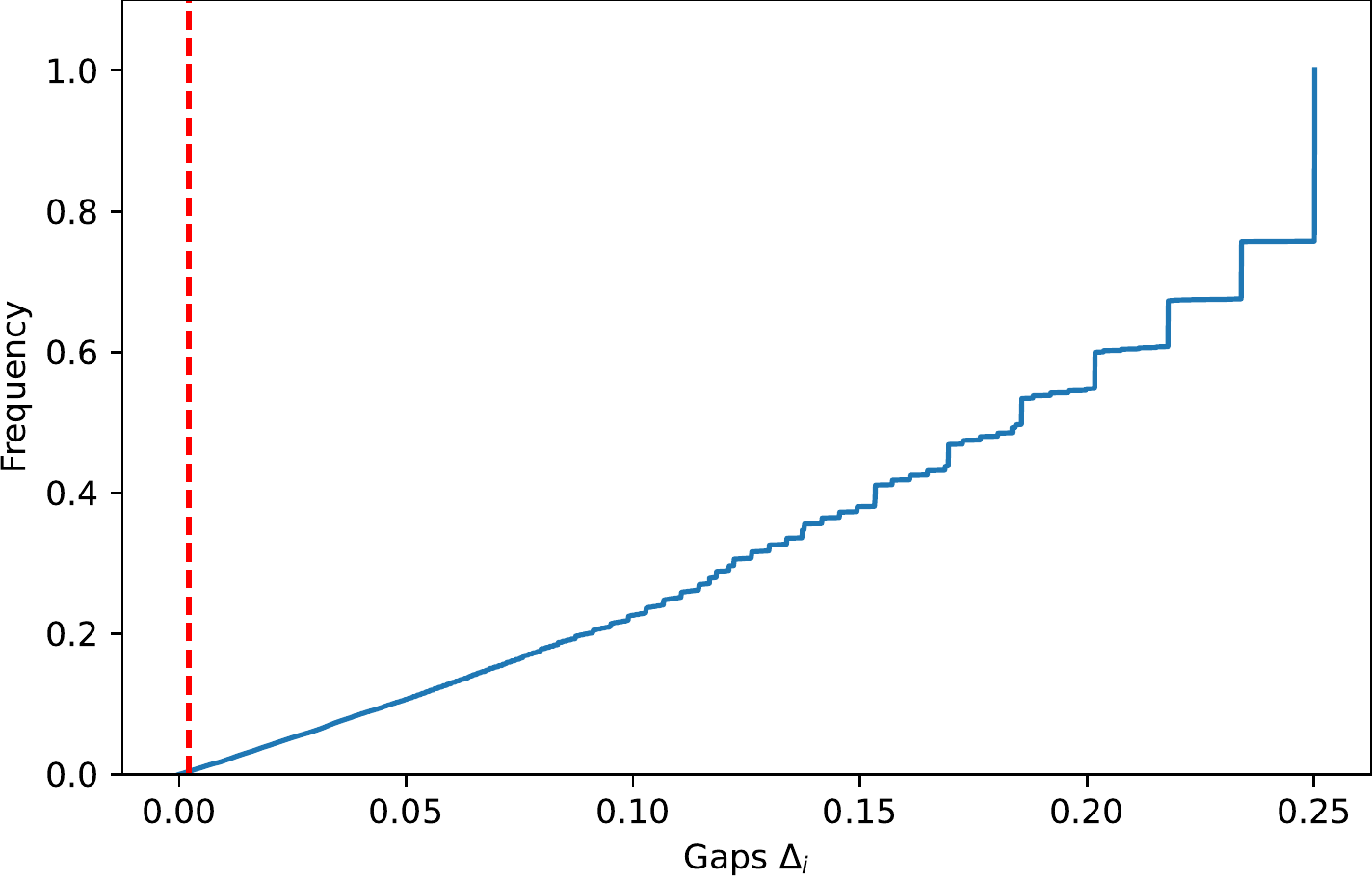}
    \subcaption{$n=15695$}
    \end{subfigure}
    \begin{subfigure}[b]{0.22\textwidth}
    \includegraphics[width=\textwidth]{figures/gapPlots/n_20000_gaps.pdf}
    \subcaption{$n=20000$}
    \end{subfigure}
    \caption{Additional gap plots, as in \Cref{fig:2dGapHist}.}\label{fig:2dGapHistApp}
\end{figure}

\clearpage
\bibliography{refs}

\end{document}